\newtheorem{theorem}{Theorem}
\newtheorem{lemma}{Lemma}
\newtheorem{definition}{Definition}
\newtheorem{observation}{Observation}
\newtheorem{corollary}{Corollary}
\let\emptyset\varnothing
\begin{document}

\title{Robust Design of Spectrum-Sharing Networks}

\author{
\IEEEauthorblockN{Qingkai Liang\IEEEauthorrefmark{1},
Hyang-Won Lee\IEEEauthorrefmark{2}, Eytan Modiano\IEEEauthorrefmark{3}}
\IEEEauthorblockA{\IEEEauthorrefmark{1}\IEEEauthorrefmark{3}Laboratory for Information and Decision Systems, MIT, Cambridge, MA\\
\IEEEauthorrefmark{2}Department of Internet and Multimedia Engineering, Konkuk University, Seoul, Korea}
\thanks{This work was supported by NSF Grants CNS-1116209 and AST-1547331. Hyang-Won Lee was supported by the Basic Science Research Program through the National Research Foundation of Korea (NRF) funded by the Ministry of Science, ICT \& Future Planning (2015R1A1A1A05001477).}
}



\maketitle

\begin{tikzpicture}[remember picture, overlay]
\node at ($(current page.north) + (-3in,-0.5in)$) {Technical Report};
\end{tikzpicture}

\begin{abstract}
In spectrum-sharing networks, primary users have the right to preempt secondary users, which significantly degrades the performance of underlying secondary users. In this paper, we use backup channels to provide reliability guarantees for secondary users. In particular, we study the optimal white channel assignment that minimizes the amount of recovery capacity (i.e., bandwidth of backup channels) needed to meet a given reliability guarantee.  This problem is shown to be coupled by two NP-hard objectives. We characterize the structure of the optimal assignment and develop bi-criteria approximation algorithms. Moreover, we investigate the scaling of the recovery capacity as the network size becomes large. It is shown that the recovery capacity is  negligible as compared to the total traffic demands in a large-scale network.
\end{abstract}

\section{Introduction}\label{sec:introduction}
In spectrum-sharing networks\footnote{The spectrum-sharing feature may be enabled via cognitive radios \cite{Mitola-CR}, geographic databases \cite{white-database}, etc.},  secondary users can access spectrum holes (referred to as white channels) that are not used by primary users. While spectrum sharing enables efficient utilization of spectrum resources, secondary networks built upon white channels can suffer from severe performance degradation since secondary users must stop using a white channel whenever it is reclaimed by a primary user (this event is called \emph{channel preemption}). 
Thus, it is necessary to provide protection for secondary users to guarantee their reliability against channel preemptions.

There have been numerous efforts towards achieving reliable communications for secondary users.  One of the important issues in this context is how the secondary network should recover from channel preemptions. A straightforward approach is to let disrupted links switch to another idle white channel on the fly \cite{Shin-DySPAN-08}\cite{Liang-JSAC-12}. This approach can, however, experience unpredictable delay until idle white channels become available. In contrast to the on-the-fly reconfiguration method, Yue \emph{et al.}\cite{Yue-ICC-08} propose to assign an extra white channel to each link in advance, in order to recover from any single channel preemption. In multi-hop networks, rerouting can be used to find a detour around interrupted links \cite{Chowdhury-JSAC-08}\cite{Chowdhury-adhoc-09}. Some recent works \cite{Liang-TON-14}\cite{Chung-ICC-13} combine channel switching and rerouting to recover secondary users' traffic.  Another line of research focuses on ``risk mitigation", which seeks to reduce the negative effects of channel preemptions on secondary networks. Zhao \emph{et al.} \cite{Zhao-JSAC-07} and Kuo \emph{et al.} \cite{Kuo-WCNC-07} exploit channel statistics to predict channel availability and design reliable MAC protocols to reduce the probability of being preempted. Cao \emph{et al.} \cite{Cao-INFOCOM-12} and Mihnea \emph{et al.} \cite{Mihnea-ICNC-14} study reliable channel assignments that maintain the network connectivity after any single channel preemption. 

Although the above schemes enhance the reliability of secondary networks, most of them only provide ``best-effort reliability". There is no guarantee on, for example, the number of channel preemptions the secondary network can recover from, or the ability to fulfill a certain reliability requirement. In this paper, we allow secondary users to specify a reliability requirement and investigate how to adhere to such a requirement at the minimum cost.

Our approach uses backup channels to recover from preemptions. These backup channels can be licensed channels leased temporarily at a cost \cite{Modiano-Mobihoc-12}, or currently unused white channels. Note that these backup channels do not necessarily stay idle when they are not used for recovery; the only requirement is that they should be available when needed for recovery (possibly at a cost).


Due to the scarcity and relative high costs of backup channels, it is necessary to minimize the amount of \emph{recovery capacity} (i.e., bandwidth of backup channels) that should be provisioned. Although many factors  can affect the amount of required recovery capacity, we focus on the influence of \emph{white channel assignment}. Specifically, we study the optimal white channel assignment that minimizes the recovery capacity required to meet a certain reliability requirement such that the network is able to recover secondary users' traffic from a given number of white channel preemptions.



Unfortunately, this problem is shown to be intractable and coupled by two NP-hard objectives. As a result, we conduct bi-criteria analysis and propose bi-criteria approximation algorithms for white channel assignment. Our simulations validate the performance of the proposed algorithms.

Another important contribution of this paper is the characterization of the scaling of the recovery capacity. It turns out that the required recovery capacity becomes negligible as compared to the total network traffic as the network becomes large. Our simulations show that under the proposed channel assignment schemes the required recovery capacity is usually less than 1\% of the total traffic. Thus, it is possible to provision guaranteed reliability in a large-scale secondary network at minimum cost.

The remainder of this paper is organized as follows. We introduce the network model and describe the problem in Section \ref{sec::model}. Next, we study the optimal white channel assignment under a given reliability requirement in Section \ref{sec::detrec}. Finally, simulation results are presented in Section \ref{sec::simulation} and conclusions are given in Section \ref{sec::conclusion}.


\section{Model and Problem Description}\label{sec::model}

\subsection{Network Model}\label{sec::network-model}

We consider a spectrum-sharing network where primary users own a set of licensed channels referred to as \emph{white channels}. Any idle white channel can be accessed by secondary users, but it should be vacated if a primary user appears in that channel (referred to as \emph{channel preemption}). When channel preemptions happen, secondary users switch to backup channels in order to resume communications.
The \emph{recovery capacity} refers to the bandwidth of backup channels we need to provision in order to meet a certain reliability requirement which will be specified in Section \ref{sec::detrec}.


The secondary network is represented by an undirected graph $G=(V,E)$, where $V$ is the set of secondary nodes and $E$ is the set of links. There is a link between two secondary nodes if they can directly communicate with each other. We consider the one-hop interference model where adjacent links cannot be active in the same channel at the same time. Although such an interference model is restrictive, it serves as the foundation for understanding more complex interference models (e.g., see \cite{one-hop1}\cite{one-hop2}). Moreover, the one-hop interference model is an appropriate model for many practical wireless systems such as spread-spectrum systems, millimeter-wave networks \cite{mmWave}, etc. Each link $e$ is associated with a traffic demand $r_e$ which is determined by some higher-layer policies (e.g., routing and flow control). We denote by $W$ the set of white channels. Each white channel $w$ can sustain a data rate up to $R_{w,e}$ over link $e$.

Now we describe the set of feasibility conditions on white channel assignment in order to sustain the given traffic demands. Let $y$ be an $|E|\times |W|$ binary matrix whose element $y_e^w=1$ if white channel $w$ is assigned to link $e$.  Note that if white channel $w$ is assigned to link $e$, this link should be scheduled for at least $\frac{r_e}{R_{w,e}}$ fraction of time in order to meet the traffic demand $r_e$.  Under the one-hop interference model, the set of links that can be activated simultaneously in the same channel form a \emph{matching}, and interfering matchings can access the same white channel in a time-sharing manner. As a result, the sustainable rate region in each channel can be described by the \emph{convex hull} of all matchings, i.e., the \emph{matching polytope}. Denote by $\mathcal{R}^w$ the rate region that can be sustained in white channel $w$.   Note that under channel assignment $y$, the rate vector that channel $w$ should support is $(y_e^w r_e)_{e\in E}$, which must belong to the rate region $\mathcal{R}^w$ if $y$ is a feasible assignment. Therefore, the feasibility condition can be interpreted as $(y_e^w r_e)_{e\in E}\in \mathcal{R}^w$ for all white channel $w$. Based on Edmond's matching polytope description \cite{Edmond}, we can further write $\mathcal{R}^w$ in a closed form and obtain the following feasibility conditions:

\vspace{-2mm}
\begin{small}
\begin{align}
&\sum_{e\in\delta(v)}\frac{r_e}{R_{w,e}}y_e^w\leq 1,\forall v\in V, w\in W\label{eqn:node-constraint-y}\\
&\sum_{e\in E(U)}\frac{r_e}{R_{w,e}}y_e^w\leq \frac{|U|-1}{2},\forall U\in \mathcal{V}, w\in W\label{eqn:oddset-constraint-y}\\
&\sum_{w\in W}y_e^w=1,\forall e\in E\label{eqn:one-channel-constraint}\\
&y_e^w\in\{0,1\},\forall e\in E, w\in W\nonumber.
\end{align}
\end{small}\vspace{-5mm}

\noindent In \eqref{eqn:node-constraint-y}, we denote by $\delta(v)$ the set of links incident on node $v$. In \eqref{eqn:oddset-constraint-y}, we define $\mathcal{V}=\{U\subseteq V: |U| \text{ odd }\geq3\}$ to be a collection of node sets with odd cardinality, and $E(U)$ is the set of links whose both ends are in $U$. For each white channel $w$, the corresponding constraints in (\ref{eqn:node-constraint-y}) and (\ref{eqn:oddset-constraint-y}) are Edmond's matching polytope description over the set of links using that channel. Specifically, the constraints in \eqref{eqn:node-constraint-y} require the total schedule length of channel $w$ not to exceed one; the constraints in \eqref{eqn:oddset-constraint-y} are called ``odd-set constraints" and we refer readers to \cite{Hajek-IT-88} or \cite{Edmond} for a detailed explanation. Overall, the constraints in (\ref{eqn:node-constraint-y}) and (\ref{eqn:oddset-constraint-y}) force all of the traffic demands to be schedulable under one-hop interference by using the given set of white channels. Hajek \emph{et al.} \cite{Hajek-IT-88} use a similar formulation to characterize schedulability in a single-channel case. Finally, the constraints in (\ref{eqn:one-channel-constraint}) force each link to be assigned exactly one white channel.

A channel assignment $y$ is said to be \emph{feasible} if it satisfies all of the above constraints.
\subsection{Problem Description}\label{sec::description}
Due to the scarcity and relatively high costs of backup channels, it is necessary to minimize the amount of recovery capacity (i.e., bandwidth of backup channels) needed to comply with a certain reliability requirement. In this paper, the secondary network is required to survive a given number of channel preemptions even in the worst case.

\begin{figure}[]
\centering \subfigure[Channel Assignment I]{\label{fig:chan-assn-1}\epsfig{file=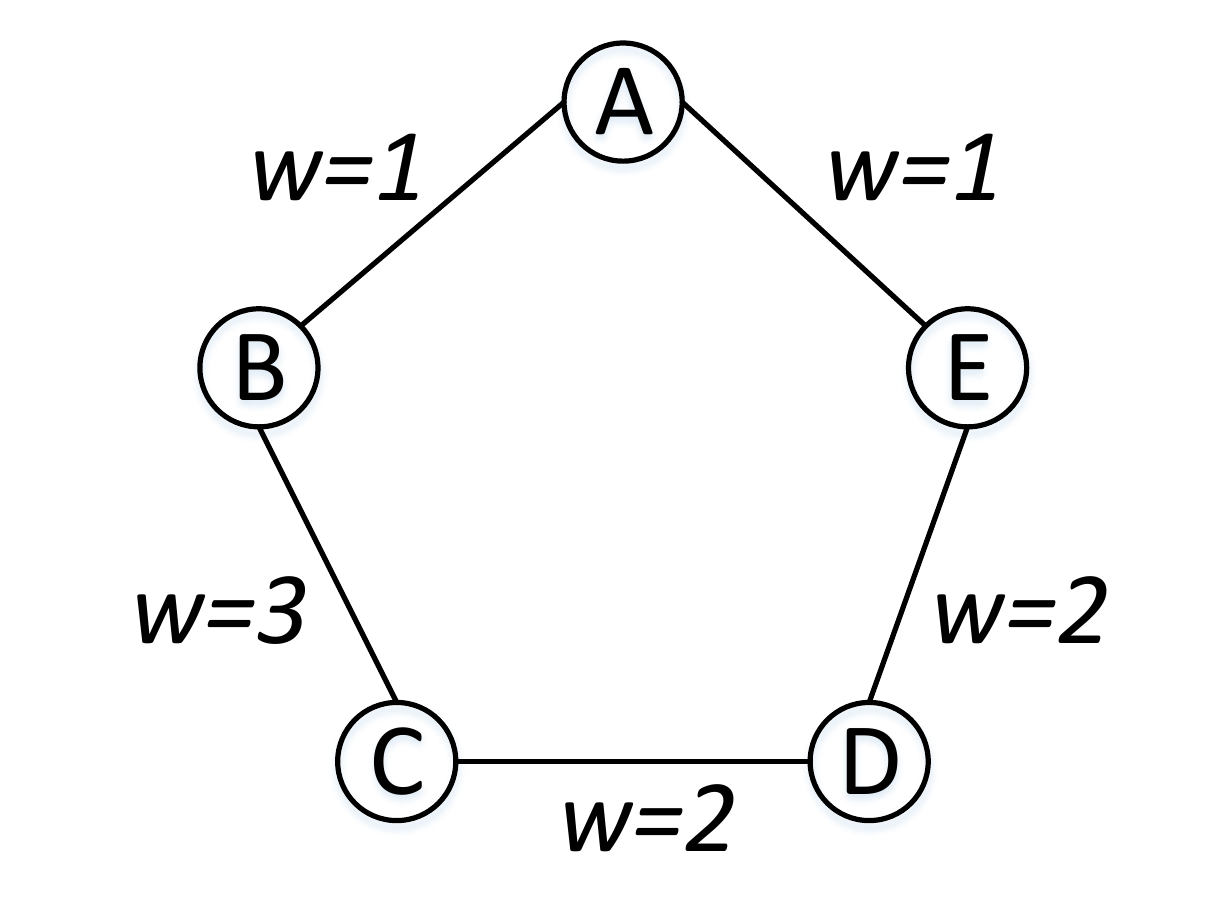,angle=0,width=0.2\textwidth}}\hspace{0.15cm}
\subfigure[Channel Assignment II]{\label{fig:chan-assn-2}\epsfig{file=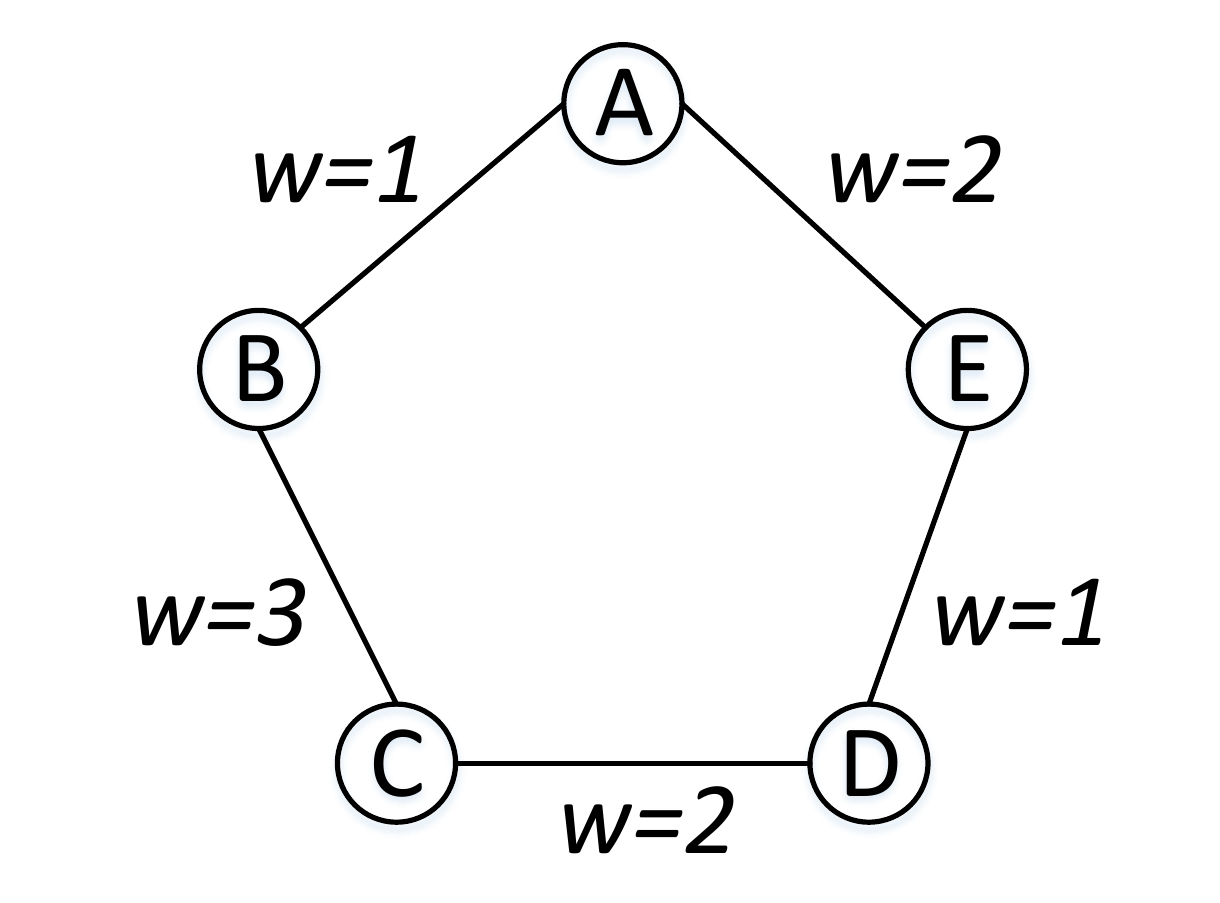,angle=0,width=0.2\textwidth}}\hspace{0.15cm}
\caption{Two different assignments of white channels. }
\label{fig:channel-assignment}\vspace{-5mm}
\end{figure}

Given a recovery requirement, the amount of recovery capacity we need to provision depends on how much traffic is lost due to channel preemptions, which is largely determined by the assignment of white channels. For example, Fig. \ref{fig:channel-assignment} illustrates two different channel assignments with 3 white channels. Each link has one-unit traffic demand, and we assume white channels have sufficiently large capacity such that any channel assignment is feasible (as long as each link is assigned exactly one white channel). Suppose we want to survive any single white channel preemption. In Fig. \ref{fig:chan-assn-1}, the preemption of channel 1 will cause the failures of two adjacent links, which requires two units of recovery capacity under one-hop interference. In contrast, the channel assignment in Fig. \ref{fig:chan-assn-2} only requires one unit of recovery capacity,  since any links that can fail at the same time (i.e., due to the failure of a single white channel) can be activated simultaneously.
\textbf{Our goal is to find a feasible white channel assignment that requires the minimum recovery capacity subject to a certain recovery requirement}.

\section{Robust White Channel Assignment}\label{sec::detrec}
In this section, we develop robust white channel assignment schemes that fulfill a given recovery requirement at minimum cost. Specifically, the network is required to survive any $k$ white channel preemptions, i.e., the backup channels should be able to support the traffic demands on the links disrupted by any $k$ white channel preemptions. Hence, the goal is to find a feasible white channel assignment requiring the minimum recovery capacity to protect against any $k$ channel preemptions. This problem is referred to as $\mathsf{WhiteRec}$:

\vspace{-3mm}
\begin{small}
\begin{align}
&\min_{C,y\text{ feasible}} \quad C\nonumber\\
\textrm{ s.t.} &\sum_{w\in S}\sum_{e\in\delta(v)}\frac{r_e}{C}y_e^w\leq 1,\forall v\in V,\,\,\forall S\in\mathcal{W}(k)\label{eqn:node-constraint-det}\\
&\sum_{w\in S}\sum_{e\in E(U)}\frac{r_e}{C}y_e^w\leq\frac{|U|-1}{2},\forall U\in\mathcal{V}, S\in\mathcal{W}(k)\label{eqn:oddset-constraint-det},
\end{align}
\end{small}\vspace{-2mm}

\noindent where the meanings of $\delta(v)$, $E(U)$ and $\mathcal{V}$ are the same as in \eqref{eqn:node-constraint-y}-\eqref{eqn:oddset-constraint-y}, and $\mathcal{W}(k)=\{S\subseteq W: |S|=k\}$ is a collection of channel sets  with cardinality $k$. Similar to (\ref{eqn:node-constraint-y}) and (\ref{eqn:oddset-constraint-y}), the constraints in (\ref{eqn:node-constraint-det}) and (\ref{eqn:oddset-constraint-det}) correspond to Edmond's matching polytope description, requiring that after any $k$ channel preemptions the traffic demands on the disrupted links  be schedulable by using a backup channel with capacity $C$. 

\subsection{Complexity Analysis}\label{sec::complexity}
In this section, we investigate the complexity of $\mathsf{WhiteRec}$. In fact, solving this problem involves finding a white channel assignment that is both \emph{feasible} (in order to support the traffic demands as described in \eqref{eqn:node-constraint-y}-\eqref{eqn:one-channel-constraint}) and \emph{optimal} (in order to minimize the recovery capacity as defined in \eqref{eqn:node-constraint-det}-\eqref{eqn:oddset-constraint-det}). Unfortunately, both of these problems are NP-hard.\vspace{-1mm}
\begin{theorem}\label{thm::NP-hard1}
Finding a feasible white channel assignment that sustains the given traffic demands is NP-hard.
\end{theorem}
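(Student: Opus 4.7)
The plan is to reduce from the classical edge coloring (chromatic index) problem, which is NP-hard by Holyer's theorem even for cubic graphs: given a graph $H$ and an integer $k$, decide whether $H$ admits a proper edge $k$-coloring. The target is to show that any such instance can be encoded as an instance of the feasibility problem for $\mathsf{WhiteRec}$.

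Given $(H,k)$, I would construct the following channel assignment instance: take the secondary graph $G=H$, let the white channel set $W$ have size $k$, and set $r_e=1$ and $R_{w,e}=1$ for every link $e$ and channel $w$. Under this choice, the feasibility constraints \eqref{eqn:node-constraint-y}--\eqref{eqn:one-channel-constraint} specialize to: (i) for each vertex $v$ and channel $w$, $\sum_{e\in\delta(v)} y_e^w\le 1$; (ii) for each odd set $U$ and channel $w$, $\sum_{e\in E(U)} y_e^w\le (|U|-1)/2$; and (iii) each edge is assigned exactly one channel. Interpreting channels as colors, condition~(i) says that the edges sharing color $w$ form a matching in $H$, while condition~(iii) says that we have a proper edge coloring of $H$ with $k$ colors. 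The odd-set inequalities (ii) are then automatically satisfied because any matching inside an odd-cardinality set $U$ has at most $(|U|-1)/2$ edges. Consequently, a feasible channel assignment exists if and only if $H$ has chromatic index at most $k$.

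The reduction runs in polynomial time, and the feasibility test is therefore at least as hard as deciding edge $k$-colorability, which completes the NP-hardness argument. The only subtle step is checking that the odd-set constraints do not further restrict the reduction; this follows immediately from the combinatorial fact that every matching satisfies Edmond's odd-set inequalities, so no spurious infeasibility is introduced. Thus the main obstacle is merely picking the right reduction source and verifying the odd-set slack; once the parameters are set as above, the equivalence is direct and the NP-hardness conclusion is immediate.
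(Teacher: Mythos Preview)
Your reduction is correct: with unit rates and capacities, constraint~\eqref{eqn:node-constraint-y} forces the edges of each channel class to form a matching, constraint~\eqref{eqn:one-channel-constraint} makes the classes partition $E$, and the odd-set inequalities~\eqref{eqn:oddset-constraint-y} are implied because any matching inside an odd set $U$ has at most $(|U|-1)/2$ edges. Feasibility is therefore equivalent to edge $k$-colorability, which is NP-complete by Holyer's theorem. The argument is sound.

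The paper takes a different route: it reduces from \textsc{Bin Packing} rather than edge coloring. The authors build a \emph{star} with $n$ leaves, set the leaf demands to the item sizes $v_1,\dots,v_n\in(0,1]$, and use $m$ unit-capacity homogeneous channels; feasibility at the center node is then exactly the question of whether the items fit into $m$ bins. The two reductions isolate different sources of hardness. The paper's reduction shows the problem is hard even on the simplest possible topology (a single star, where the odd-set constraints are vacuous), so all the difficulty is \emph{numerical}---the fractional demands must be packed. Your reduction shows the problem is hard even when all numerical data are trivial ($r_e=R_{w,e}=1$), so the difficulty is purely \emph{structural}, arising from the interference pattern of the graph. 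Both conclusions are useful; the paper's version meshes more directly with the subsequent bin-packing-flavored approximation analysis, while yours ties the feasibility question cleanly to the edge-coloring viewpoint exploited later in the interference-free assignment discussion.
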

\begin{proof}
Our proof is based on the reduction from the Bin Packing Problem which is known to be NP-hard.
\begin{itemize}
\item Problem: Bin Packing Problem
\item Input: a set of $n$ items with volume $v_1,v_2,...,v_n\in (0,1]$ and a set of $m$ bins with unit capacity
\item Decision: whether we can pack the $n$ items into the $m$ bins
\end{itemize}
To show the reduction, consider a star network with $n$ links incident on a common node. The traffic demands on these links are $v_1,v_2,\cdots,v_n$. Suppose we have $m$ homogeneous white channels, each with unit capacity. We would like to find a mapping from the $n$ links to the $m$ white channels such that the traffic demands are sustainable. Obviously, this is equivalent to determining the feasibility of packing the $n$ items into the  $m$ bins. As a result, it is NP-hard to find a feasible assignment to support the given demands, even in a star network and when channels are homogeneous.
\end{proof}

\begin{theorem}\label{thm::NP-hard2}\vspace{-1mm}
Finding a white channel assignment that requires the minimum recovery capacity is NP-hard. Moreover, \textbf{even if any channel assignment is feasible} (i.e., the capacity of each white channel is sufficiently large such that the traffic demands are always sustainable under any channel assignment), the problem remains NP-hard.
\end{theorem}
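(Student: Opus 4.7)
The plan is to reduce from the \textsc{Partition} problem, which asks whether a given multiset of positive integers $a_1,\ldots,a_n$ with total sum $2B$ can be split into two subsets each summing to $B$. \textsc{Partition} is NP-complete, and the construction below turns any \textsc{Partition} instance into a $\mathsf{WhiteRec}$ instance whose optimum detects the answer.

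First I would build a star network consisting of a central node $v_0$ and $n$ leaf nodes $v_1,\ldots,v_n$, where link $e_i = (v_0,v_i)$ carries traffic demand $r_{e_i} = a_i$. All $n$ links share the endpoint $v_0$ and therefore mutually interfere under the one-hop interference model. I would take $k=1$ and introduce exactly $m=2$ white channels, each with capacity chosen strictly larger than $2B$ (and with $R_{w,e_i}$ correspondingly large). Because the per-channel load is bounded by the total demand $2B$, the feasibility conditions \eqref{eqn:node-constraint-y}--\eqref{eqn:oddset-constraint-y} hold for \emph{every} binary assignment $y$; thus the construction will simultaneously witness the ``even if any assignment is feasible'' clause.

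Next I would verify that on this star the matching-polytope description collapses: for any odd set $U\subseteq V$ with $v_0\in U$ and $|U|\ge 3$, the odd-set inequality in \eqref{eqn:oddset-constraint-det} is dominated by the node inequality at $v_0$ in \eqref{eqn:node-constraint-det}, while for any $U\subseteq V\setminus\{v_0\}$ one has $E(U)=\emptyset$. Consequently, with $k=1$ the smallest admissible $C$ for a given $y$ is exactly
\[
C^{\star}(y) \;=\; \max_{w\in\{1,2\}}\ \sum_{i:\, y_{e_i}^w=1} a_i,
\]
i.e.\ the two-machine makespan of the job-to-machine assignment encoded by $y$. The minimum of $C^{\star}(y)$ over all $y$ therefore equals $B$ if and only if the given \textsc{Partition} instance admits a balanced split, so a polynomial-time algorithm for $\mathsf{WhiteRec}$ would decide \textsc{Partition} and both claims of the theorem would follow.

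I do not expect a substantive technical obstacle. The only care point is confirming that on a star all odd-set constraints \eqref{eqn:oddset-constraint-det} are redundant given the node constraint at $v_0$, so that $\mathsf{WhiteRec}$ on this family of instances coincides literally with two-machine makespan minimisation; once this is in place, the NP-hardness of $\mathsf{WhiteRec}$, together with its persistence in the ``feasibility-free'' regime, is immediate from NP-hardness of \textsc{Partition}.
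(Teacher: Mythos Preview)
Your proposal is correct and follows essentially the same reduction as the paper: a \textsc{Partition} instance is mapped to a star network with two (arbitrarily large-capacity) white channels and $k=1$, so that the minimum recovery capacity equals the two-way makespan $\max\{\sum_{e\in E(w_1)} r_e,\sum_{e\in E(w_2)} r_e\}$. The only difference is that you spell out explicitly why the odd-set constraints \eqref{eqn:oddset-constraint-det} are redundant on a star, which the paper leaves implicit.
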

\begin{proof}
Our proof is based on the reduction from the Partition Problem which is known to be NP-hard \cite[p. 223]{garey:johnson}.
\begin{itemize}
\item Problem: Partition Problem
\item Input: A set $A$ of positive integers given by $A=\{r_1,...,r_n\}$
\item Output: A subset $S\subset A$ s.t. $\max\{sum(S),sum(A\setminus S)\}$ is minimized, where $sum(T)$ is the sum of all the elements in $T$
\end{itemize}
To show a mapping from the Partition Problem to our problem, consider a star network with $n$ links, where link $i$ has a traffic demand $r_i$. Suppose that we have two white channels $w_1,w_2$, and both channels have sufficiently large capacity such that any channel assignment is feasible. The goal is to find a white channel assignment requiring the minimum recovery channel capacity to recover from any single channel preemption. Hence, it is desirable to balance the loads on each white channel.

More formally, let $E(w)$ be the set of links using white channel $w$. It is easy to see that the recovery channel capacity in the formulation of $\mathsf{WhiteRec}$ can be expressed as
\[
\textstyle{C=\max\left\{\sum_{e\in E(w_1)}r_e,\sum_{e\in E(w_2)}r_e\right\}}.
\]
In this setting, finding a white channel assignment minimizing $C$ is equivalent to finding a subset $S$ in the Partition Problem. This completes the proof.
\end{proof}

\noindent The two theorems imply that $\mathsf{WhiteRec}$ is a complicated problem coupled by two NP-hard objectives: finding a feasible assignment to support the traffic demand and finding an optimal assignment that requires the minimum recovery capacity. To address this difficulty, we introduce a technique called \emph{bi-criteria approximation} \cite{bi-criteria} which allows the feasibility constraints to be violated by a bounded amount while ensuring some approximation ratio with respect to the recovery capacity. The formal definition is as follows.
\begin{definition}[Bi-Criteria Approximation]
An algorithm achieves $(\rho,\phi)$-approximation to $\mathsf{WhiteRec}$ if the following two conditions are satisfied simultaneously.

\noindent - It requires at most $\rho$ times of the minimum recovery capacity.

\noindent - It guarantees that at least $\phi$-fraction of the traffic demand is sustained over each link.
\end{definition}

\noindent  In the following sections, we first analyze the bi-criteria structure of $\mathsf{WhiteRec}$. Based on the analysis,  several approximation algorithms are developed and their bi-criteria approximation ratios are studied.

\subsection{Bi-Criteria Analysis}\label{sec::optimal-structure}
In this section, we investigate the bi-criteria structure of the optimal feasible solution to $\mathsf{WhiteRec}$. Specifically, we are interested in the structure that requires the minimum recovery capacity (i.e., optimality analysis, Sec. \ref{subsec::optimal}) and that sustains the given traffic demands (i.e., feasibility analysis, Sec. \ref{subsec::feasibility}). Finally, the relationship between optimality and feasibility is discussed.
\subsubsection{Optimality Analysis}\label{subsec::optimal}
We first study the structure of the optimal assignment that requires the minimum recovery capacity. The particular form of $\mathsf{WhiteRec}$ allows us to express the required recovery capacity $C$ in a closed form. It is easy to see that constraints in \eqref{eqn:node-constraint-det} are equivalent to
\begin{equation}\label{eqn::M1}
C\ge M_1(y,k),
\end{equation}
where \vspace{-6mm}
\[
\quad \quad M_1(y,k)=\max_{v\in V,S\in\mathcal{W}(k)}\sum_{w\in S}\sum_{e\in\delta(v)}r_ey_e^w.
\]
Similarly, constraints in \eqref{eqn:oddset-constraint-det} are equivalent to
\begin{equation}\label{eqn::M2}
C\ge M_2(y,k),
\end{equation}
where \vspace{-6.8mm}
\[
\quad \quad M_2(y,k)=\max_{U\in\mathcal{V},S\in\mathcal{W}(k)}\frac{2}{|U|-1}\sum_{w\in S}\sum_{e\in E(U)}r_ey_e^w.
\]
Combining \eqref{eqn::M1} and \eqref{eqn::M2}, we can rewrite constraints \eqref{eqn:node-constraint-det} and \eqref{eqn:oddset-constraint-det} in $\mathsf{WhiteRec}$ as
\begin{equation}\label{eqn::M1M2}
C\ge \max\{M_1(y,k),M_2(y,k)\}\triangleq C(y,k).
\end{equation}
In other words, given a white channel assignment $y$, the value of $C(y,k)$ is the minimum recovery capacity required to recover from any $k$ channel preemptions. As a result, $\mathsf{WhiteRec}$ can be rewritten as
\begin{align}
\min_{y}& \quad C(y,k)\nonumber\\
\textrm{ s.t.}& \quad y\text{ is feasible}\nonumber.
\end{align}
Note that $M_2(y,k)$ corresponds to the ``odd-set constraints" in \eqref{eqn:oddset-constraint-det} which are difficult to handle in general. Hence, it is natural to consider the relaxation of $\mathsf{WhiteRec}$ by neglecting $M_2(y,k)$. The relaxed problem is referred to as $\mathsf{WhiteRecApprox}$, i.e.,
\begin{align}
\min_{y}& \quad M_1(y,k)\nonumber\\
\textrm{ s.t.}& \quad y\text{ is feasible}\nonumber.
\end{align}

\noindent The following lemma shows that the relaxation of $M_2(y,k)$ only leads to a small loss in optimality.

\begin{lemma}\label{thm:C(y,k)-bounds}
For any channel assignment $y$, we have
\begin{equation}\label{eqn:approx-rate-C-M1}
M_1(y,k)\leq C(y,k)\leq1.5M_1(y,k).
\end{equation}
\end{lemma}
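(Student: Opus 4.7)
The lower bound $M_1(y,k) \le C(y,k)$ is immediate from the definition $C(y,k) = \max\{M_1(y,k), M_2(y,k)\}$, so the entire content of the lemma lies in showing $M_2(y,k) \le 1.5\, M_1(y,k)$, which then yields $C(y,k) \le 1.5\, M_1(y,k)$.

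The plan is the standard degree-summation argument for odd-set constraints in the matching polytope, adapted to our multi-channel setting. Fix an arbitrary odd set $U \in \mathcal{V}$ and a $k$-channel subset $S \in \mathcal{W}(k)$ attaining (or witnessing a lower bound on) the maximum in $M_2(y,k)$. For every vertex $v \in U$, the node-constraint witness value obeys
\[
\sum_{w\in S}\sum_{e\in\delta(v)} r_e y_e^w \le M_1(y,k)
\]
by the definition of $M_1(y,k)$. Summing this inequality over all $v \in U$ gives a bound of $|U|\,M_1(y,k)$ on the total.

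Next I would observe that in the double sum $\sum_{v\in U}\sum_{e\in\delta(v)} r_e y_e^w$, each edge $e \in E(U)$ is counted exactly twice (once per endpoint), while each edge with exactly one endpoint in $U$ is counted once. Dropping the nonnegative boundary contribution therefore yields
\[
2\sum_{w\in S}\sum_{e\in E(U)} r_e y_e^w \;\le\; |U|\,M_1(y,k).
\]
Dividing by $|U|-1$ and multiplying by $2$, the left-hand side becomes precisely the expression appearing in the definition of $M_2(y,k)$, while the right-hand side becomes $\frac{|U|}{|U|-1} M_1(y,k)$.

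Finally, since $U$ ranges over odd sets with $|U| \ge 3$, the ratio $\frac{|U|}{|U|-1}$ is maximized at $|U|=3$, giving the factor $3/2 = 1.5$. Taking the maximum over $U$ and $S$ on the left-hand side yields $M_2(y,k) \le 1.5\, M_1(y,k)$, completing the proof. I do not anticipate a real obstacle here: the only subtlety is tracking that odd sets are constrained to $|U|\ge 3$ (otherwise the bound would degrade), and that boundary edges leaving $U$ can be safely discarded because their contribution is nonnegative.
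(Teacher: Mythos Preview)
Your proposal is correct and follows essentially the same approach as the paper: both arguments reduce to the degree-summation (double-counting) bound $\sum_{e\in E(U)} r_e y_e^w \le \tfrac{1}{2}\sum_{v\in U}\sum_{e\in\delta(v)} r_e y_e^w$, then apply the definition of $M_1(y,k)$ vertex-by-vertex and maximize $\tfrac{|U|}{|U|-1}$ at $|U|=3$. The only cosmetic difference is that the paper writes the chain of inequalities starting from $M_2(y,k)$ directly, whereas you first fix a witnessing $(U,S)$ and argue pointwise; the content is identical.
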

\begin{proof}
The lower bound follows from the definition of $C(y,k)$.  To show the upper bound, we notice that

\vspace{-3mm}
\begin{small}
\begin{align*}
\vspace{-1mm}
M_2(y,k)&=\max_{U\in\mathcal{V},S\in\mathcal{W}(k)}\frac{2}{|U|-1}\sum_{w\in S}\sum_{e\in E(U)}r_ey_e^w\\
&=\frac{1}{2}\max_{U\in\mathcal{V},S\in\mathcal{W}(k)}\frac{2}{|U|-1}\sum_{w\in S}\sum_{v\in U}\sum_{e\in \delta(v)\cap E(U)}r_ey_e^w\\
&\leq\frac{1}{2}\max_{U\in\mathcal{V},S\in\mathcal{W}(k)}\frac{2}{|U|-1}\sum_{v\in U}\sum_{w\in S}\sum_{e\in \delta(v)}r_ey_e^w\\
&\leq\frac{1}{2}\max_{U\in\mathcal{V},S\in\mathcal{W}(k)}\frac{2}{|U|-1}\sum_{v\in U}M_1(y,k)\\
& = \frac{1}{2} M_1(y,k)\max_{U\in\mathcal{V}}\frac{2|U|}{|U|-1}=\frac{3}{2}M_1(y,k).
\vspace{-1mm}
\end{align*}
\end{small}\vspace{-3mm}

\noindent The second inequality is due to the definition of $M_1(y,k)$, and the last equality holds because $|U|\ge 3$.
\end{proof}

\noindent  Lemma \ref{thm:C(y,k)-bounds} shows that the optimal solution to $\mathsf{WhiteRecApprox}$ attains 1.5-approximation to the original problem $\mathsf{WhiteRec}$ with respect to the required recovery capacity $C(y,k)$.
In fact, in bipartite graphs, there is even no approximation gap between $\mathsf{WhiteRecApprox}$ and $\mathsf{WhiteRec}$.

\vspace{1mm}

\noindent \textbf{Special Case: Bipartite Graphs.} The notion of bipartite graphs can characterize any graph without odd-length cycles such as trees. Using the particular structures of bipartite graphs, we can show Lemma \ref{thm::bipartite-C(y,k)-bound}.

\begin{lemma}\label{thm::bipartite-C(y,k)-bound}
In a bipartite network, $M_1(y,k)\ge M_2(y,k)$ for any channel assignment $y$ and any positive integer $k$.
\end{lemma}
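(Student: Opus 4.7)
The plan is to exploit the defining property of a bipartite graph, namely that every edge in $E(U)$ has one endpoint on each side of the bipartition. The idea is to bound the edge sum $\sum_{e \in E(U)} r_e y_e^w$ by charging every edge to its endpoint on the smaller side of $U$ and then invoking the definition of $M_1$ at each such vertex.

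Concretely, fix an arbitrary $U \in \mathcal{V}$ (so $|U|$ is odd and $\ge 3$) and an arbitrary $S \in \mathcal{W}(k)$. Let $(A,B)$ be a bipartition of $G$, and set $U_1 = U \cap A$, $U_2 = U \cap B$. Because $|U|$ is odd, the two parts cannot have equal size, so the smaller one — say WLOG $U_1$ — satisfies $|U_1| \le (|U|-1)/2$. Every edge of $E(U)$ has exactly one endpoint in $U_1$, which lets me write
\[
\sum_{e \in E(U)} r_e y_e^w \;=\; \sum_{v \in U_1} \sum_{e \in \delta(v) \cap E(U)} r_e y_e^w \;\le\; \sum_{v \in U_1} \sum_{e \in \delta(v)} r_e y_e^w.
\]
Summing over $w \in S$ and applying the definition of $M_1(y,k)$ vertex-by-vertex gives
\[
\sum_{w \in S} \sum_{e \in E(U)} r_e y_e^w \;\le\; \sum_{v \in U_1} \sum_{w \in S} \sum_{e \in \delta(v)} r_e y_e^w \;\le\; |U_1| \cdot M_1(y,k).
\]
Multiplying by $2/(|U|-1)$ and using $|U_1| \le (|U|-1)/2$ yields $\frac{2}{|U|-1} \sum_{w \in S} \sum_{e \in E(U)} r_e y_e^w \le M_1(y,k)$. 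Since this holds for every admissible $U$ and $S$, taking the maximum produces $M_2(y,k) \le M_1(y,k)$.

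I do not anticipate a serious obstacle: the only nontrivial step is recognizing that in a bipartite graph the odd-set constraint is redundant precisely because odd cardinality forces an imbalanced split $|U_1| < |U_2|$, which is exactly what reduces the factor $|U|/(|U|-1)$ (the worst case in the general bound of Lemma~\ref{thm:C(y,k)-bounds}) down to $1$. This is essentially the bipartite matching counterpart of the argument in Lemma~\ref{thm:C(y,k)-bounds}, but tightened by a factor of $2$ through the observation that edges in $E(U)$ are counted only once (at their $U_1$-endpoint) rather than twice (once at each endpoint).
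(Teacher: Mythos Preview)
Your proof is correct and follows the same underlying idea as the paper's: find a vertex cover of the induced subgraph on $U$ of size at most $(|U|-1)/2$, charge every edge of $E(U)$ to a covering vertex, and then bound each vertex's contribution by $M_1(y,k)$.

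The difference lies in how the small vertex cover is obtained. The paper takes the \emph{minimum} vertex cover of $(U^*,E(U^*))$ and invokes K\"onig's theorem to conclude that its size equals the maximum matching size, hence is at most $\lfloor |U^*|/2\rfloor = (|U^*|-1)/2$. You instead observe directly that the smaller side $U_1$ of the bipartition restricted to $U$ is already a vertex cover of $E(U)$, and odd parity forces $|U_1|\le (|U|-1)/2$. Your route is more elementary---it bypasses K\"onig's theorem entirely---while the paper's route is slightly more general in flavor (it would extend verbatim to any graph class where the minimum vertex cover of every odd induced subgraph is at most $(|U|-1)/2$). For the purposes of this lemma, your argument is the cleaner of the two.
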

\begin{proof}
See Appendix \ref{proof::thm::bipartite-C(y,k)-bound}.
\end{proof}

\noindent This lemma implies that $C(y,k)=M_1(y,k)$ in a bipartite graph, and thus we can safely ignore $M_2(y,k)$ without sacrificing any optimality. In other words, the optimal solution to the relaxed problem $\mathsf{WhiteRecApprox}$ is also the optimal solution to the original problem $\mathsf{WhiteRec}$ if the secondary network is bipartite.


\subsubsection{Feasibility  Analysis}\label{subsec::feasibility}
Next, we study the feasibility conditions \eqref{eqn:node-constraint-y}-\eqref{eqn:one-channel-constraint} and investigate the structure of channel assignments that are able to sustain the largest amount of traffic. In particular, we investigate the relationship between feasibility and optimality, which is important for our subsequent bi-criteria approximation analysis.

It is clear  that deciding feasibility is equivalent to the following optimization problem $\mathsf{FEASI}$ which finds the maximum fraction of traffic that can be sustained over each link.

\vspace{-4mm}
\begin{small}
\begin{align}
\mathsf{FEASI}:\quad &\max_{y,\beta} \quad \beta\nonumber\\
\textrm{ s.t.}&\sum_{e\in\delta(v)} \frac {\beta r_e}{R_{w,e}} y_e^w\le 1,\forall v\in V, w\in W \label{AUX_c1}\\
&\sum_{e\in E(U)}\frac{\beta r_e}{R_{w,e}} y_e^w\le \frac{|U|-1}{2},\forall U\in\mathcal{V}, w\in W\label{AUX_c2}\\
& \sum_{w\in W}y_e^w=1,\forall e\in E \label{AUX_c3}\\
&y_e^w\in\{0,1\},\forall e\in E, w\in W\nonumber.
\end{align}
\end{small}\vspace{-4mm}

\noindent Clearly, the original problem $\mathsf{WhiteRec}$ is feasible if and only if the optimal value $\beta^*$ in $\mathsf{FEASI}$ is greater or equal to 1. Now let $\beta(y)$ be the maximum value of $\beta$ in $\mathsf{FEASI}$ under an assignment $y$. The following lemma relates $\beta(y)$ to $C(y,1)$.

\begin{lemma}\label{lm::feasibility}
\[
\small
\frac{R_{\min}}{C(y,1)} \le \beta(y) \le \frac{R_{\max}}{C(y,1)},
\]
where $R_{\min}$ and $R_{\max}$ are the minimum and the maximum white channel capacity, respectively.
\end{lemma}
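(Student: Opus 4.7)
The plan is to work directly from the definition $\beta(y) = \max\{\beta : \text{the constraints \eqref{AUX_c1}-\eqref{AUX_c2} hold}\}$ and exploit the fact that the left-hand sides of those constraints are essentially the quantities that appear in $M_1(y,1)$ and $M_2(y,1)$, modulo the channel-rate weights $1/R_{w,e}$. The only role of $R_{\min}$ and $R_{\max}$ in the bound is to absorb these weights, since $R_{\min}\le R_{w,e}\le R_{\max}$.

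For the lower bound, I would verify that $\beta_0 \triangleq R_{\min}/C(y,1)$ is itself feasible for $\mathsf{FEASI}$ under the assignment $y$, which already forces $\beta(y)\ge \beta_0$. For any $v\in V$ and $w\in W$, using $R_{\min}/R_{w,e}\le 1$ and the definition of $M_1$,
\[
\sum_{e\in\delta(v)}\frac{\beta_0 r_e}{R_{w,e}}y_e^w \;\le\; \frac{1}{C(y,1)}\sum_{e\in\delta(v)} r_e y_e^w \;\le\; \frac{M_1(y,1)}{C(y,1)} \;\le\; 1,
\]
and analogously for any $U\in\mathcal V$,
\[
\sum_{e\in E(U)}\frac{\beta_0 r_e}{R_{w,e}}y_e^w \;\le\; \frac{1}{C(y,1)}\sum_{e\in E(U)} r_e y_e^w \;\le\; \frac{(|U|-1)/2\cdot M_2(y,1)}{C(y,1)} \;\le\; \frac{|U|-1}{2},
\]
where the last step uses $C(y,1)\ge M_2(y,1)$ from \eqref{eqn::M1M2}. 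So both families of constraints are met.

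For the upper bound, I would pick a worst-case tuple that attains $C(y,1)$. If $C(y,1)=M_1(y,1)$, let $(v^*,w^*)$ achieve the max in the definition of $M_1$; then the node constraint \eqref{AUX_c1} for that pair, together with $1/R_{w^*,e}\ge 1/R_{\max}$, yields
\[
\beta(y)\cdot\frac{M_1(y,1)}{R_{\max}} \;\le\; \beta(y)\sum_{e\in\delta(v^*)}\frac{r_e}{R_{w^*,e}}y_e^{w^*} \;\le\; 1,
\]
so $\beta(y)\le R_{\max}/M_1(y,1)=R_{\max}/C(y,1)$. If instead $C(y,1)=M_2(y,1)$, the same argument with a maximizing odd-set $(U^*,w^*)$ in \eqref{AUX_c2} gives $\beta(y)\le R_{\max}/M_2(y,1)=R_{\max}/C(y,1)$. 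Taking the binding case in each instance completes the bound.

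I do not anticipate a real obstacle here: the lemma is essentially an unfolding of definitions, and the only subtle point is keeping straight that $C(y,1)=\max\{M_1,M_2\}$ so that the lower bound must clear both constraints simultaneously, while the upper bound need only be witnessed by whichever of $M_1,M_2$ is the larger. If anything is worth flagging, it is the edge case in which $\sum r_e y_e^w = 0$ for some $v,w$ (and likewise for odd sets), but those trivially do not restrict $\beta(y)$ and can be excluded from the relevant maxima without affecting either inequality.
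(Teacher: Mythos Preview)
Your argument is correct and is essentially the same as the paper's: both exploit $R_{\min}\le R_{w,e}\le R_{\max}$ to sandwich the rate-weighted sums in \eqref{AUX_c1}--\eqref{AUX_c2} between the unweighted sums defining $M_1(y,1)$ and $M_2(y,1)$, and then invoke $C(y,1)=\max\{M_1,M_2\}$. The only cosmetic difference is that the paper first rewrites $\beta(y)=\min\{Z_1(y),Z_2(y)\}$ with $Z_i$ the reciprocals of the binding node- and odd-set expressions, and bounds each $Z_i$ separately, whereas you verify feasibility of $\beta_0=R_{\min}/C(y,1)$ directly and then exhibit a binding constraint for the upper bound; the underlying steps are identical.
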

\begin{proof}
From constraints \eqref{AUX_c1}-\eqref{AUX_c2}, it follows that
\[
\small
\beta(y)=\min \{Z_1(y),Z_2(y)\},
\]
where \vspace{-3.5mm}
\[
\small
Z_1(y)=\min_{v\in V, w\in W}\frac{1}{\sum_{e\in\delta(v)}r_e y_e^w/R_{w,e}},
\]
\[
\small
Z_2(y)=\min_{U\in\mathcal{V}, w\in W}\frac{1}{\frac{2\sum_{e\in E(U)} r_e y_e^w/R_{w,e}}{|U|-1}}.
\]
Under the above notations, $\mathsf{FEASI}$ becomes

\vspace{-3mm}
\begin{small}
\begin{align}
\max_{y}& \quad \min\{Z_1(y),Z_2(y)\}\nonumber\\
\textrm{ s.t.}&\sum_{w\in W}y_e^w=1,\forall e\in E\nonumber\\
&y_e^w\in\{0,1\},\forall e\in E, w\in W\nonumber.
\end{align}
\end{small}\vspace{-5mm}

\noindent It is clear that
\[
\small\begin{split}
Z_1(y)\ge \frac{R_{\min}}{\max_{v\in V, w\in W}\sum_{e\in\delta(v)}r_e y_e^w}=\frac{R_{\min}}{M_1(y,1)},
\end{split}
\]
and similarly we have
$
Z_2(y)\ge \frac{R_{\min}}{M_2(y,1)}.
$
Then we obtain that
\[
\small
\begin{split}
\beta(y)&\ge \min\{\frac{R_{\min}}{M_1(y,1)}, \frac{R_{\min}}{M_2(y,1)}\}\\
&=\frac{R_{\min}}{\max\{M_1(y,1),M_2(y,1)\}}=\frac{R_{\min}}{C(y,1)}.
\end{split}
\]
Similarly, we can show
$
\beta(y)\le \frac{R_{\max}}{C(y,1)}.
$
\end{proof}
\noindent This lemma shows that if an assignment $y$ yields a smaller $C(y,1)$, it tends to sustain more traffic. In particular, if white channels are homogeneous with capacity $R$, the lemma implies $\beta(y) =\frac{R}{C(y,1)}$. In this case, minimizing the recovery capacity required to survive a \textbf{single} preemption is equivalent to maximizing the amount of sustainable traffic. Therefore, this lemma bridges feasibility and optimality, which is important for our subsequent bi-criteria approximation analysis.
\subsection{Algorithm 1: Greedy Algorithm}\label{sec::naive-alg}
In this section, we propose a simple greedy algorithm to solve $\mathsf{WhiteRec}$ and analyze its bi-criteria approximation ratio.

Without loss of generality, let the links in the secondary network be indexed by $e_1,\cdots,e_n$, where $n=|E|$. The greedy algorithm assigns a white channel to each of these links sequentially. Suppose we are deciding the channel assignment for link $e_i=(u,v)$, and define $\delta(u,v)=\delta(u)\cup\delta(v)$, i.e., the set of links incident on node $u$ or node $v$. The greedy rule is to pick the white channel that currently sustains the least traffic over the links in $\delta(u,v)$. The detailed procedures are presented in Algorithm \ref{alg::greedy}, where $E_w$ corresponds to the set of links that are assigned channel $w$.

\begin{algorithm}[ht]
\caption{Greedy White Channel Assignment}\label{alg::greedy}
\begin{algorithmic}[1]
\STATE Initialize  $E_w=\emptyset,~\forall w\in W$;
\FOR{$e_i=e_1,\cdots,e_n$}
\STATE Assign white channel $w^*$ to link $e_i=(u,v)$, where $w^*=\arg\min_{w\in W}\sum_{e\in \delta(u,v)\cap E_w}r_e$;\label{s4}
\STATE $E_{w^*}\leftarrow E_{w^*}\cup \{e_i\}$;
\ENDFOR
\end{algorithmic}
\end{algorithm}

The bi-criteria approximation ratio of this greedy algorithm is given in Theorem \ref{thm::greedy-approx}, where we define $R_{\min}$ and $R_{\max}$ to be the minimum and the maximum white channel capacity, respectively.

\begin{theorem}\label{thm::greedy-approx}
Suppose there exists a feasible solution to $\mathsf{WhiteRec}$. Then the greedy algorithm achieves $(\rho, \frac{1}{\rho}\frac{R_{\min}}{R_{\max}})$-approximation to $\mathsf{WhiteRec}$, where $\rho=\frac{3}{2}(3-\frac{2}{|W|})$.
\end{theorem}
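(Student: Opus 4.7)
The plan is to split the bi-criteria claim into two halves --- (a) the recovery-capacity bound $C(y_g,k)\leq \rho\,C^*$ and (b) the feasibility bound $\beta(y_g)\geq \frac{1}{\rho}\frac{R_{\min}}{R_{\max}}$ --- where $y_g$ denotes the greedy output and $C^*$ the $\mathsf{WhiteRec}$ optimum. By Lemma \ref{thm:C(y,k)-bounds}, it suffices for (a) to show $M_1(y_g,k)\leq (3-2/|W|)\,C^*$. To this end, I would first record two structural lower bounds on $C^*$: (i) $C^*\geq r_e$ for every link $e$, since in $y^*$ the channel assigned to $e$ already contributes $r_e$ to the $M_1$ expression at either endpoint of $e$ whenever $S$ contains that channel; and (ii) $C^*\geq (k/|W|)\max_v T_v$, with $T_v:=\sum_{e\in\delta(v)}r_e$, by pigeonhole on the $|W|$ channels at a node.

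The core step is the $M_1$-bound. Let $(v^*,S^*)$ attain the maximum defining $M_1(y_g,k)$. For each $w\in S^*$ that receives any link of $\delta(v^*)$ under $y_g$, let $e_w^*=(u_w,v^*)$ be the last link in $\delta(v^*)$ that greedy routed to $w$. Greedy's min-load rule at the moment $e_w^*$ was scheduled implies
\[
L_w^{\delta(u_w,v^*)}|_{\text{before}} \;\leq\; \frac{1}{|W|}\sum_{w'\in W} L_{w'}^{\delta(u_w,v^*)}|_{\text{before}} \;\leq\; \frac{T_{u_w}+T_{v^*}-2\,r_{e_w^*}}{|W|},
\]
where the second inequality uses inclusion--exclusion on $\delta(u_w)\cup\delta(v^*)$ (whose intersection is $\{e_w^*\}$) and discards $e_w^*$ itself, which is not yet assigned. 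Adding $r_{e_w^*}$ for $e_w^*$'s own contribution bounds $L_w^{\delta(v^*)}|_{\text{final}}$ by $r_{e_w^*}(1-2/|W|)+(T_{u_w}+T_{v^*})/|W|$. Summing over $w\in S^*$ and applying (i) and (ii) yields $M_1(y_g,k)\leq k\,C^*(1-2/|W|)+2\,C^*=(k+2-2k/|W|)\,C^*$, which specialises to $(3-2/|W|)\,C^*$ at $k=1$, matching the stated $\rho=\frac{3}{2}(3-2/|W|)$. Lemma \ref{thm:C(y,k)-bounds} then gives $C(y_g,k)\leq \frac{3}{2}\,M_1(y_g,k)\leq \rho\,C^*$.

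For (b), Lemma \ref{lm::feasibility} gives $\beta(y_g)\geq R_{\min}/C(y_g,1)$. Since $\mathsf{WhiteRec}$ is feasible by hypothesis, some feasible $y^\dagger$ has $\beta(y^\dagger)\geq 1$, whereupon Lemma \ref{lm::feasibility} forces $C(y^\dagger,1)\leq R_{\max}$; in particular $\min_y C(y,1)\leq R_{\max}$. Applying (a) at $k=1$ gives $C(y_g,1)\leq \rho\,R_{\max}$, so $\beta(y_g)\geq R_{\min}/(\rho R_{\max})=\frac{1}{\rho}\frac{R_{\min}}{R_{\max}}$. I expect the main obstacle to be the $M_1$-bound itself: choosing the right per-channel witness $e_w^*$ so that greedy's min-load rule can be invoked, and carrying through the inclusion--exclusion on $\delta(u_w)\cup\delta(v^*)$ in which $r_{e_w^*}$ must be subtracted \emph{twice} --- once as the shared edge in the incidence union and once because it has not yet been scheduled. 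This double subtraction is precisely what produces the $-2/|W|$ correction that sharpens the naive constant $3$ to $3-2/|W|$.
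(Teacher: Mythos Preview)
Your overall structure mirrors the paper's proof almost exactly --- the per-channel bound via the greedy rule, the inclusion--exclusion producing the $-2/|W|$ correction, and the feasibility half via Lemma~\ref{lm::feasibility} applied at $k=1$ are all the same ideas the paper uses. There is, however, one genuine gap.

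Your bound on $M_1(y_g,k)$ comes out as $(k+2-2k/|W|)\,C^*$, and you yourself note that this only equals $(3-2/|W|)\,C^*$ at $k=1$. But Theorem~\ref{thm::greedy-approx} asserts the $k$-independent ratio $\rho=\frac{3}{2}(3-2/|W|)$ for \emph{every} $k$, so as written the argument does not prove the theorem for $k\ge 2$. The loss comes from the step where you apply (i) termwise to get $\sum_{w\in S^*} r_{e_w^*}\le k\,C^*$. The paper instead uses the collective bound
\[
\sum_{w\in S^*} r_{e_w^*}\;\le\; C^*,
\]
which holds because the witness edges $\{e_w^*:w\in S^*\}$ are $k$ \emph{distinct} edges all incident on $v^*$ (each lies in a different $E_w$). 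For any assignment $y$ --- in particular the optimum --- one can pick $S\in\mathcal{W}(k)$ containing the channels that $y$ assigns to these $k$ edges, so that $M_1(y,k)\ge \sum_{w\in S^*} r_{e_w^*}$; hence $C^*\ge \sum_{w\in S^*} r_{e_w^*}$. Plugging this into your per-channel inequality and summing over $w\in S^*$ gives
\[
M_1(y_g,k)\;\le\;\Bigl(1-\tfrac{2}{|W|}\Bigr)C^* + 2\,C^*\;=\;\Bigl(3-\tfrac{2}{|W|}\Bigr)C^*,
\]
independently of $k$, and then Lemma~\ref{thm:C(y,k)-bounds} finishes part~(a). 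Your part~(b) is correct and matches the paper (noting, as the paper does, that the greedy output does not depend on $k$, so the $k=1$ instance of~(a) is legitimately applicable).
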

\begin{proof}
See Appendix \ref{proof::thm::greedy-approx}.
\end{proof}
\noindent For instance, if there are 2 homogeneous white channels, the greedy algorithm is guaranteed to sustain at least $\frac{1}{3}$ traffic demands while requiring less than 3 times of the minimum recovery capacity in $\mathsf{WhiteRec}$.

The advantage of the greedy algorithm is in its simplicity. In fact, it does not require any global information when assigning channels for each individual link; thus, this greedy algorithm can even be implemented in a distributed manner, where more fresh local information can be used to improve the overall performance. Moreover, it is applicable to arbitrary networks. Although the theoretical approximation ratio of this algorithm is relatively loose, its practical performance turns out to be much better than the theoretical guarantee\footnote{Similar greedy algorithms have been shown to perform extremely well for frequency assignment in WDM-based optical networks \cite{WDM}.} (see Section \ref{sec::simulation}). Moreover, it is possible to improve the approximation ratio in a wide range of graphs. For example, with Lemma \ref{thm::bipartite-C(y,k)-bound}, it can be shown that the approximation ratio can be improved by a factor of 1.5 in bipartite graphs.

\subsection{Algorithm 2: Interference-Free Assignment}\label{sec::improve-alg}
The above greedy algorithm is simple and has provable performance in any scenario but suffers from the relatively loose approximation ratio. In this section, we discuss an alternative channel assignment scheme, called Interference-Free Assignment (IFA), which is less general than the greedy algorithm but achieves much better performance.
\begin{definition}[Interference-Free Assignment]\label{def::IFA}
An  assignment $y$ is said to be interference-free if any two interfering links are assigned distinct white channels.
\end{definition}
\noindent For example, the channel assignment in Fig. \ref{fig:chan-assn-2} is interference-free while the one in Fig. \ref{fig:chan-assn-1} is not. Conceivably, IFA requires less recovery capacity since links that fail together due to any single channel preemption do not interfere with each other and can be activated simultaneously. Through the rest of this section, we study the properties of IFA. In particular, we will show IFA has nearly-optimal performance.

We first investigate the conditions for the existence of IFA. Note that IFA requires that adjacent links be assigned different channels; this is similar to \emph{edge coloring} where each white channel corresponds to a color. From Vizing's Theorem \cite{vizing:chromatic} for edge coloring, we have the following observation:

\begin{observation}\label{thm:existence-ifca}
There exists an interference-free channel assignment if the number of white channels is greater than the maximum node degree, i.e., $|W|>d_{\max}$.
\end{observation}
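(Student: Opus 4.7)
The plan is to reduce the existence of an interference-free channel assignment directly to the existence of a proper edge coloring of $G$, and then invoke Vizing's Theorem. First I would formalize the correspondence: treat each white channel in $W$ as a distinct color, and interpret the channel assignment $y$ as a function $c:E\to W$ with $c(e)=w$ whenever $y_e^w=1$. Under the one-hop interference model used in Section~\ref{sec::network-model}, two links interfere if and only if they share a common endpoint, i.e., they are adjacent in $G$. Thus the requirement of Definition~\ref{def::IFA} (any two interfering links receive distinct channels) translates exactly into the requirement that adjacent edges of $G$ receive distinct colors under $c$, which is the definition of a proper edge coloring.

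Next I would invoke Vizing's Theorem \cite{vizing:chromatic}, which asserts that any simple undirected graph admits a proper edge coloring using at most $\Delta(G)+1$ colors, where $\Delta(G)$ is the maximum node degree. In our notation, $\Delta(G)=d_{\max}$, so $G$ admits a proper edge coloring with at most $d_{\max}+1$ colors. Whenever $|W|>d_{\max}$, i.e., $|W|\ge d_{\max}+1$, we have at least as many available channels as the required number of colors, so such a coloring can be realized by a valid assignment $c:E\to W$. Setting $y_e^{c(e)}=1$ and $y_e^{w}=0$ for $w\ne c(e)$ yields a channel assignment that is interference-free by construction and also satisfies the one-channel-per-link constraint \eqref{eqn:one-channel-constraint}.

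There is essentially no obstacle here beyond checking that the one-hop interference model used throughout the paper really does equate ``interfering'' with ``edge-adjacent in $G$,'' which is immediate. Note that the statement is purely about existence of an \emph{interference-free} assignment and makes no claim about feasibility in the sense of \eqref{eqn:node-constraint-y}--\eqref{eqn:oddset-constraint-y}; capacity considerations and odd-set constraints are not part of the observation, so Vizing's Theorem alone suffices and no further argument is needed.
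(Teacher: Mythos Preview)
Your proposal is correct and mirrors the paper's own justification: the paper also identifies an interference-free assignment with a proper edge coloring (channels as colors) and invokes Vizing's Theorem to conclude existence whenever $|W|>d_{\max}$. Your write-up is simply a more detailed version of the same argument.
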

\noindent The above observation shows that IFA does not always exist and is thus less general than the greedy algorithm. However, the condition shown in  the above observation is very mild in practice since the number of white channels is usually much larger than the number of neighbors a node has \cite{Liang-TON-14}.

Now we develop an algorithm for constructing an interference-free assignment (Algorithm \ref{alg::IFA}). This algorithm gives an interference-free assignment whenever $|W|>d_{\max}$. Note that this algorithm is still valid if  $|W|\le d_{\max}$ but it does not have a provable performance in this case. Note also that this algorithm colors edges with white channels and there are several polynomial-time algorithms that can perform edge-coloring with $d_{\max}+1$ colors in a simple graph (e.g., \cite{coloring-alg}), therefore Algorithm \ref{alg::IFA} can be run in polynomial time.

\begin{algorithm}[ht]
\caption{Interference-Free Channel Assignment}\label{alg::IFA}
\begin{algorithmic}[1]
\STATE Color the graph with $d_{\max}+1$ colors, which partitions the edges into $d_{\max}+1$ matchings; \\ // These matchings are denoted by $I_1,\cdots,I_{d_{\max}+1}$.
\FOR{$i=1:d_{\max}+1$}
\STATE Assign edges in matching $I_i$ to white channel $w_i$, where $w_i=i\mod~|W|$;
\ENDFOR
\end{algorithmic}
\end{algorithm}

Next, we investigate the properties of IFA. The most important one is given in Lemma \ref{lem:ifca-minimize-m1} which shows that \textbf{any} interference-free channel assignment minimizes $M_1(y,k)$.
\begin{lemma}\label{lem:ifca-minimize-m1}
Consider any two interference-free channel assignments $\bar{y},~\tilde{y}$ and any non-interference-free assignment $\hat{y}$. Then the following relationship holds: $M_1(\bar{y},k)=M_1(\tilde{y},k)\le M_1(\hat{y},k)$ for all $k\in\mathbb{Z}^+$.
\end{lemma}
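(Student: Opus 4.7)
My plan is to rewrite $M_1(y,k)$ in terms of a per-node, per-channel load. For any assignment $y$ and node $v$, define $T_v^w(y) = \sum_{e \in \delta(v)} r_e\, y_e^w$, the total traffic on channel $w$ among edges incident to $v$. Then $\sum_{w \in S}\sum_{e \in \delta(v)} r_e y_e^w = \sum_{w \in S} T_v^w(y)$, so the inner maximization over $|S|=k$ selects the $k$ largest values of the multiset $\{T_v^w(y)\}_{w \in W}$. Hence $M_1(y,k) = \max_{v \in V}\mathrm{Top}_k\bigl(\{T_v^w(y)\}_{w\in W}\bigr)$, where $\mathrm{Top}_k$ denotes the sum of the $k$ largest elements (padded with zeros if fewer than $k$ are nonzero).

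For the equality between two interference-free assignments, I use the following structural fact: if $y$ is interference-free, then no two edges in $\delta(v)$ share a channel (they all pairwise interfere at $v$), and each edge receives exactly one channel by \eqref{eqn:one-channel-constraint}. Consequently the multiset $\{T_v^w(y):w\in W\}$ is precisely $\{r_e:e\in\delta(v)\}$ padded with $|W|-|\delta(v)|$ zeros. This multiset depends only on $v$ and the traffic profile, not on $y$; call its $\mathrm{Top}_k$ value $\tau_k(v)$. Therefore $M_1(\bar y,k) = M_1(\tilde y,k) = \max_v \tau_k(v)$ for any pair of interference-free assignments.

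For the inequality against a non-interference-free $\hat y$, I pick $v^\star \in \arg\max_v \tau_k(v)$ and show the top-$k$ sum of $\{T_{v^\star}^w(\hat y)\}_w$ is already at least $\tau_k(v^\star)$. Let $E^\star \subseteq \delta(v^\star)$ be a set of $\min(k,|\delta(v^\star)|)$ edges attaining the top $k$ demands at $v^\star$, so $\sum_{e\in E^\star}r_e=\tau_k(v^\star)$, and let $W^\star$ be the set of channels $\hat y$ assigns to the edges of $E^\star$. Every $e \in E^\star$ contributes $r_e$ to exactly one $T_{v^\star}^w(\hat y)$ with $w\in W^\star$, so $\sum_{w\in W^\star} T_{v^\star}^w(\hat y) \ge \sum_{e\in E^\star} r_e = \tau_k(v^\star)$. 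Since $|W^\star|\le|E^\star|\le k$, I extend $W^\star$ to a size-$k$ set $S\subseteq W$; the added terms are nonnegative, so $\sum_{w\in S} T_{v^\star}^w(\hat y)\ge \tau_k(v^\star)$, yielding $M_1(\hat y,k) \ge \tau_k(v^\star) = M_1(\bar y,k)$.

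The only nontrivial step is the last paragraph's pigeonhole observation—that grouping the demands of $\delta(v^\star)$ under $\hat y$ can only \emph{increase} the top-$k$ aggregated load—but this follows directly once $|W^\star|\le k$ is noted, so no heavy machinery is required.
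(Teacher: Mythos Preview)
Your proof is correct and follows essentially the same approach as the paper: both recognize that under an interference-free assignment the per-node channel loads are exactly the individual edge demands $\{r_e:e\in\delta(v)\}$ (making $M_1$ independent of the particular IFA), and both handle the non-interference-free case by observing that the top-$k$ edges at the maximizing node use at most $k$ channels under $\hat{y}$, so a size-$k$ channel set covering them already witnesses $M_1(\hat{y},k)\ge\tau_k(v^\star)$. Your $\mathrm{Top}_k$ reformulation is a bit more explicit than the paper's presentation, but the underlying argument is identical.
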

\begin{proof}
For any interference-free assignment $\bar{y}$, let $\bar{S}\in\mathcal{W}(k)$ and $\bar{v}\in V$ be such that
\begin{equation}
M_1(\bar{y},k)=\sum_{w\in \bar{S}}\sum_{e\in\delta(\bar{v})}r_e\bar{y}_e^w.
\end{equation}
Since $\bar{y}$ is interference-free, all the links incident on a node are assigned different white channels. This is also true for another interference-free assignment $\tilde{y}$. Thus, there exists a set $\tilde{S}\in\mathcal{W}(k)$ such that
\[
\{e\in\delta(\bar{v}):\sum_{w\in\tilde{S}}\tilde{y}_e^w=1\}=\{e\in\delta(\bar{v}):\sum_{w\in \bar{S}}\bar{y}^{w}_e=1\}.
\]
Therefore, we have
\[
\sum_{w\in \bar{S}}\sum_{e\in\delta(\bar{v})}r_e\bar{y}_e^w=\sum_{w\in \tilde{S}}\sum_{e\in\delta(\bar{v})}r_e\tilde{y}_e^w,
\]
which implies $M_1(\bar{y},k)\le M_1(\tilde{y},k)$ by the definition of $M_1(y,k)$. Similarly, we can prove $M_1(\bar{y},k)\ge M_1(\tilde{y},k)$. As a result, it follows that $M_1(\bar{y},k)=M_1(\tilde{y},k)$  for any interference-free channel assignments $\bar{y}$ and $\tilde{y}$.

To prove the second part, consider a non-interference-free channel assignment $\hat{y}$. Obviously, under the asignment $\hat{y}$, the preemption of $k$ white channels can possibly lead to the preemption of more than $k$ links incident on a node. Hence, there exists a set $\hat{S}\in\mathcal{W}(k)$ such that
\[
\{e\in\delta(\bar{v}):\sum_{w\in\hat{S}}\hat{y}_e^w=1\}\supseteq\{e\in\delta(\bar{v}):\sum_{w\in \bar{S}}\bar{y}^{w}_e=1\}.
\]
Therefore, we can conclude that $M_1(\bar{y},k)\leq M_1(\hat{y},k)$.
\end{proof}
\noindent Lemma  \ref{lem:ifca-minimize-m1} together with Lemma \ref{thm:C(y,k)-bounds} immediately implies that IFA achieves no more than 1.5 times of the minimum recovery capacity. In fact, we can further tighten this bound, as shown in the following theorem.
\begin{theorem}\label{thm::ifca-bound}
Suppose there is a feasible solution to $\mathsf{WhiteRec}$ and an interference-free assignment exists. Then any interference-free assignment achieves $(\frac{5}{4},\frac{R_{\min}}{R_{\max}})$ approximation to $\mathsf{WhiteRec}$.
\end{theorem}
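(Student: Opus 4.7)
The plan is to establish the two bi-criteria bounds separately, with Lemma \ref{lem:ifca-minimize-m1} (IFA minimizes $M_1(\cdot,k)$) as the main lever. Let $\bar y$ be the interference-free assignment, $y^{*}$ an optimizer of $\mathsf{WhiteRec}$, and $C^{*}=C(y^{*},k)$. By Lemma \ref{lem:ifca-minimize-m1}, $M_1(\bar y,k)\le M_1(y^{*},k)\le C^{*}$, so everything reduces to bounding $M_2(\bar y,k)$.

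For the recovery ratio I would let $(\bar U,\bar S)$ attain $M_2(\bar y,k)$ and split on $|\bar U|$. If $|\bar U|\ge 5$, I would rerun the chain of inequalities from the proof of Lemma \ref{thm:C(y,k)-bounds} but stop at the factor $\tfrac{|\bar U|}{|\bar U|-1}\le \tfrac{5}{4}$, yielding $M_2(\bar y,k)\le \tfrac{5}{4}M_1(\bar y,k)\le \tfrac{5}{4}C^{*}$. If $|\bar U|=3$, then $\bar U$ spans a triangle, $\tfrac{2}{|\bar U|-1}=1$, and interference-freeness forces the three triangle edges onto three distinct channels, so $M_2(\bar y,k)$ equals exactly the sum of the top $\min(k,3)$ edge-rates of that triangle. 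Under $y^{*}$ the same three edges use some number $m^{*}\in\{1,2,3\}$ of distinct channels; choosing $S$ of size $k$ to contain those channels (or, if $k<m^{*}$, the heaviest of them) and a short case analysis on $m^{*}$ shows that the corresponding triangle-contribution under $y^{*}$ is at least $M_2(\bar y,k)$, so $C^{*}\ge M_2(y^{*},k)\ge M_2(\bar y,k)$. Combining the two cases, $C(\bar y,k)=\max\{M_1(\bar y,k),M_2(\bar y,k)\}\le \tfrac{5}{4}C^{*}$.

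For the feasibility ratio, by Lemma \ref{lm::feasibility} it suffices to prove $C(\bar y,1)\le R_{\max}$. At $k=1$, for any odd $U$ and any channel $w$ the set $\{e\in E(U):\bar y_e^{w}=1\}$ is a matching in $G[U]$ of size at most $(|U|-1)/2$, hence $\tfrac{2}{|U|-1}\sum_{e\in E(U)}r_e\bar y_e^{w}\le \max_e r_e$, which in turn equals $M_1(\bar y,1)$ (witnessed by taking the endpoint and channel of the heaviest edge). Therefore $M_2(\bar y,1)\le M_1(\bar y,1)$ and $C(\bar y,1)=M_1(\bar y,1)$. Applying Lemma \ref{lem:ifca-minimize-m1} once more gives $C(\bar y,1)\le C(y^{*},1)$, and Lemma \ref{lm::feasibility} applied to the feasible $y^{*}$ forces $C(y^{*},1)\le R_{\max}$. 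Hence $\beta(\bar y)\ge R_{\min}/C(\bar y,1)\ge R_{\min}/R_{\max}$.

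The delicate step is the $|\bar U|=3$ case of the recovery bound, since the generic estimate $M_2\le\tfrac{|U|}{|U|-1}M_1$ degrades to $\tfrac{3}{2}$ there and cannot be sharpened by purely vertex-local accounting. The workaround is to stop comparing $M_2(\bar y,k)$ to $M_1(\bar y,k)$ and instead lower-bound the $M_2$ piece of $C^{*}$ directly via the odd-set constraint at the same triangle, exploiting the fact that a triangle carries only three edges and therefore fits into any choice of $S$ as soon as $k\ge m^{*}$ (with easy truncation to the heaviest channels when $k<m^{*}$).
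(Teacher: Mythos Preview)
Your proof is correct and follows essentially the same route as the paper: split the odd-set bound according to whether the maximizing $|U|$ is $3$ (where IFA is pointwise optimal on each triangle, so one compares directly to the triangle contribution under $y^{*}$) or $|U|\ge 5$ (where rerunning the Lemma~\ref{thm:C(y,k)-bounds} chain gives the factor $\tfrac{|U|}{|U|-1}\le\tfrac{5}{4}$), and for feasibility use $C(\bar y,1)=M_1(\bar y,1)$ together with Lemmas~\ref{lem:ifca-minimize-m1} and~\ref{lm::feasibility}. Your triangle case analysis on $m^{*}$ is more explicit than the paper's one-line ``optimal in that induced graph'' justification, but the substance is identical.
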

\begin{proof}
We first prove that any IFA achieves no more than $\frac{5}{4}$ times the minimum recovery capacity. We start by introducing a lemma whose proof is similar to Lemma \ref{thm:C(y,k)-bounds} and thus omitted.
\begin{lemma}\label{im_lm}
Let $\mathcal{V}'=\{U\subseteq V:~|U|\ge 5, |U|~\text{odd}\}$. Then for any assignment $y$ and integer $k\ge 1$:
\small
\[
\max_{U\in \mathcal{V}', S\in\mathcal{W}(k)}\frac{2}{|U|-1} \sum_{w\in S}\sum_{e\in E(U)}r_ey_e^w\le \frac{5}{4}M_1(y,k).
\]
\end{lemma}

Now we get down to proving that any IFA achieves no more than $\frac{5}{4}$ times of the minimum recovery capacity. Denote $\mathcal{V}_3$ the collection of node sets with cardinality 3. For any channel assignment $y$ and any integer $k\ge 1$, we rewrite $C(y,k)$ as:
\[
\small
\begin{split}
C(y,k)&=\max\{M_1(y,k),M_2(y,k)\}\\
&=\max\Big\{M_1(y,k), \max_{U\in \mathcal{V}_3,S\in\mathcal{W}(k)}\frac{2}{|U|-1} \sum_{w\in S}\sum_{e\in E(U)}r_ey_e^w,\\
& \qquad \qquad \qquad \qquad \max_{U\in \mathcal{V}', S\in\mathcal{W}(k)}\frac{2}{|U|-1} \sum_{w\in S}\sum_{e\in E(U)}r_ey_e^w \Big\}\\
& \stackrel{\Delta}{=}\max\{M_1(y,k),A(y,k),B(y,k)\}.
\end{split}
\]
Let $y'$ be an arbitrary IFA and $y^*$ be the optimal solution to $\mathsf{WhiteRec}$. We observe three key facts:
\begin{itemize}
\item[(1)] $A(y',k)\le A(y^*,k)$. This is due to the fact that in any induced graph of 3 nodes the interference-free assignment $y'$ allocates different channels to different edges, which is optimal in that induced graph.
\item[(2)] $B(y',k)\le \frac{5}{4} M_1(y',k)\le \frac{5}{4} M_1(y^*,k)$. This is due to Lemma \ref{im_lm} and the fact that any IFA minimizes $M_1(y,k)$.
\item[(3)] $M_2(y',k)\ge M_1(y',k)$ otherwise $C(y',k)=M_1(y',k)\le M_1(y^*,k)\le C(y^*,k)$, which implies that $y'$ is optimal. This fact further shows that $C(y',k)=\max\{A(y',k),B(y',k)\}$.
\end{itemize}

\noindent Then we have
\[
\small
\begin{split}
\frac{C(y',k)}{C(y^*,k)}&=\frac{\max\{A(y',k),B(y',k)\}}{\max\{M_1(y^*,k),A(y^*,k),B(y^*,k)\}}\\
&\le \max\Big\{\frac{A(y',k)}{A(y^*,k)},\frac{B(y',k)}{M_1(y^*,k)}\Big\}\le \frac{5}{4}.
\end{split}
\]

Next, we prove that any IFA $y'$ can sustain $\frac{R_{\min}}{R_{\max}}$-fraction of traffic over each link. Note that $C(y',1)=M_1(y',1)$ under the IFA $y'$. By Lemma \ref{lem:ifca-minimize-m1}, the IFA $y'$ minimizes $M_1(y,1)$, so we have $C(y',1)=M_1(y',1)\le M_1(y,1)\le C(y,1)$ for any assignment $y$. Let $\hat{y}$ be a feasible solution to $\mathsf{WhiteRec}$, i.e., $\beta(\hat{y})\ge 1$. Then it follows that for any IFA $y'$
\[
\small
\beta(y')\ge \frac{R_{\min}}{C(y',1)}\ge \frac{R_{\min}}{R_{\max}}\frac{R_{\max}}{C(\hat{y},1)}\ge \frac{R_{\min}}{R_{\max}}\beta(\hat{y})\ge \frac{R_{\min}}{R_{\max}},
\]
where the first and third inequalities are due to Lemma \ref{lm::feasibility}, the second inequality is due to our claim that $C(y',1)\le C(\hat{y},1)$ and the last inequality holds because of our assumption that $\beta(\hat{y})\ge 1$. This completes our proof.
\end{proof}
Note that IFA has a much better approximation ratio than the greedy algorithm with respect to both the recovery capacity and the sustainable traffic. In particular, if channels are homogeneous, then any interference-free assignment is guaranteed to sustain 100\% traffic demands while requiring less than 1.25 times of the minimum recovery capacity. The caveat is that such a good approximation ratio only holds true when IFA exists (i.e., when $|W|>d_{\max}$).
In fact, IFA is even optimal with respect to the recovery capacity in many scenarios, as is shown in Corollary \ref{thm::ifca-optimal}.
\begin{corollary}\label{thm::ifca-optimal}
Suppose there is a feasible solution to $\mathsf{WhiteRec}$ and an interference-free assignment exists. Then any interference-free assignment achieves $(1,\frac{R_{\min}}{R_{\max}})$ approximation to $\mathsf{WhiteRec}$ in \textbf{any} of the following scenarios:\\
(i) $k=1$, i.e., we need to survive any single preemption;\\
(ii) $r_e=r$ $\forall e\in E$ and $k\le d_{\max}$, i.e., traffic is uniform and no more than $d_{\max}$ preemptions are to be survived;\\
(iii) the secondary network is bipartite;\\
\end{corollary}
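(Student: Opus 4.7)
The feasibility claim (that every link sustains at least $R_{\min}/R_{\max}$ of its demand) is identical to the one already carried out in Theorem~\ref{thm::ifca-bound}, so I focus solely on sharpening the optimality factor from $5/4$ down to $1$, i.e.\ on proving $C(y',k)\le C(y^*,k)$ in each of the three regimes, where $y'$ is any interference-free assignment and $y^*$ is an optimal assignment for $\mathsf{WhiteRec}$. The unifying observation is that Lemma~\ref{lem:ifca-minimize-m1} already gives $M_1(y',k)\le M_1(y^*,k)\le C(y^*,k)$; hence it is enough to bound $M_2(y',k)$ by $C(y^*,k)$ under the hypothesis of each scenario.

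Scenario (iii) is immediate from Lemma~\ref{thm::bipartite-C(y,k)-bound}: in a bipartite graph $M_2(y,k)\le M_1(y,k)$ for every $y$, so $C(y,k)=M_1(y,k)$, and minimizing $M_1$ already minimizes $C$. For scenario (i), I exploit that under an IFA the links sharing any single channel $w$ form a matching, so for any odd set $U$ that matching contains at most $(|U|-1)/2$ edges of $E(U)$, each of rate at most $r_{\max}:=\max_{e}r_e$. This yields
\[
M_2(y',1)\le \max_{U,w}\tfrac{2}{|U|-1}\cdot\tfrac{|U|-1}{2}\cdot r_{\max}=r_{\max}.
\]
On the other hand, for the edge $e^*$ attaining $r_{\max}$, one of its endpoints together with the channel $y^*$ assigns to $e^*$ already witnesses $M_1(y^*,1)\ge r_{\max}$, so $M_2(y',1)\le M_1(y^*,1)\le C(y^*,1)$ and the desired bound follows.

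For scenario (ii), under IFA each node sees at most one link per channel, and a node of degree $d_{\max}\ge k$ therefore sees $d_{\max}$ distinct channels; picking any $k$ of them attains $M_1(y',k)=kr$. An analogous per-channel matching argument inside any odd set $U$ shows $M_2(y',k)\le kr$, hence $C(y',k)=kr$. The main obstacle here---the only step that is not essentially a repackaging of earlier lemmas---is to show that every feasible $y^*$ also satisfies $C(y^*,k)\ge kr$. The plan is a short pigeonhole argument at the vertex $v^*$ of degree $d_{\max}$: letting $n_w$ count links of $\delta(v^*)$ assigned to channel $w$ under $y^*$, we have $\sum_w n_w=d_{\max}\ge k$, and $\max_{|S|=k}\sum_{w\in S}n_w\ge k$ because either at least $k$ channels have $n_w\ge 1$ (take any $k$ of them) or else strictly fewer than $k$ channels are used and their total alone is $d_{\max}\ge k$. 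Multiplying by $r$ gives $M_1(y^*,k)\ge kr$, hence $C(y^*,k)\ge kr\ge C(y',k)$, completing the argument.
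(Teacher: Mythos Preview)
Your proof is correct and follows essentially the same approach as the paper: in each scenario you (and the paper) argue that under an interference-free assignment the odd-set term $M_2$ is dominated by the trivially optimal value of $M_1$ (namely $r_{\max}$ for~(i) and $kr$ for~(ii)), and invoke Lemmas~\ref{thm::bipartite-C(y,k)-bound} and~\ref{lem:ifca-minimize-m1} for~(iii). Your pigeonhole argument at the maximum-degree vertex for the lower bound $C(y^*,k)\ge kr$ in scenario~(ii) makes explicit what the paper simply calls ``clear,'' but otherwise the two arguments coincide.
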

\begin{proof}
See Appendix \ref{proof::thm::ifca-optimal}.
\end{proof}

\noindent The common feature of the above scenarios is that $C(y,k)=M_1(y,k)$ holds for any interference-free assignment $y$; as a result, Lemma \ref{lem:ifca-minimize-m1} implies that any interference-free assignment minimizes $C(y,k)$ in these cases.
Note that if white channels are homogeneous, then any interference-free assignment requires the minimum recovery capacity and 100\% traffic demands can be sustained in any of the above scenarios. In other words, IFA is both feasible and optimal in these cases.
%

\subsection{Scaling of Recovery Capacity}\label{sec::scaling}
In this section, we investigate the scaling of the required recovery capacity under the proposed algorithms. Specifically, we show that the required recovery capacity becomes  negligible as compared to the total traffic if the network size is relatively large.

To facilitate our analysis, we make a simplified assumption that traffic is uniform across the entire secondary network, i.e., $r_e=r$ for any $e\in E$. Also assume that white channels are homogeneous, i.e., $R_{w,e}=R$ for any $w\in W$ and $e\in E$. Denote by $C^*(k)$ the recovery capacity required to protect against any $k$ channel preemptions under Algorithm \ref{alg::IFA}. Note that Algorithm \ref{alg::IFA} is still valid when $|W|\le d_{\max}$, although the resulted assignment may not be interference-free (yet our subsequent analysis on recovery capacity scaling does not require that the assignment be interference-free). Also let $L_{tot}$ be the total traffic demands in the secondary network, i.e., $L_{tot}=\sum_{e\in E}r_e=r|E|$. The following theorem shows the scaling of the relative recovery capacity ratio $C^*(k)\slash L_{tot}$ with the network size $|V|$.

\begin{theorem}\label{thm::capacity-ratio}
Suppose there is a feasible solution to $\mathsf{WhiteRec}$. Then $\frac{C^*(k)}{L_{tot}}=O(\frac{1}{|V|})$ as $|V|\rightarrow\infty$ for any $k\in\mathbb{Z}^+$.
\end{theorem}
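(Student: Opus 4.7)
The plan is to combine a constant upper bound on $C^*(k)$ (independent of $|V|$) with a linear lower bound on $L_{tot}$. Let $y$ denote the channel assignment produced by Algorithm \ref{alg::IFA}. First I would apply Lemma \ref{thm:C(y,k)-bounds} to obtain $C^*(k)=C(y,k)\le \frac{3}{2}M_1(y,k)$, reducing the problem to bounding $M_1(y,k)$. Since traffic is uniform with $r_e=r$, the definition of $M_1(y,k)$ makes it $r$ times the maximum number of links incident on any single node that could be disrupted by the failure of some $k$ white channels; this quantity is trivially at most $d_{\max}$. Hence $C^*(k)\le \frac{3}{2}\,r\,d_{\max}$, and this bound holds regardless of whether Algorithm \ref{alg::IFA} produces a genuine interference-free assignment (which matters because when $|W|\le d_{\max}$ the algorithm only cycles through colors).

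Next I would use the feasibility hypothesis to turn $d_{\max}$ into a constant. For any feasible assignment $\hat{y}$, constraint \eqref{eqn:node-constraint-y} gives $\sum_{e\in\delta(v)}(r/R)\,\hat{y}_e^w\le 1$ for every $v\in V$ and every $w\in W$. Summing over $w\in W$ and applying \eqref{eqn:one-channel-constraint} yields $(r/R)\,d(v)\le |W|$, so $d_{\max}\le R|W|/r$. Because $R$, $r$, and $|W|$ are parameters that do not grow with $|V|$, this shows $d_{\max}$ is bounded by a constant, and consequently $C^*(k)\le \frac{3}{2}R|W|$ is uniformly bounded in $|V|$.

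Finally I would lower-bound the total traffic. Without loss of generality every node is incident to at least one link, since isolated nodes contribute nothing to either $C^*(k)$ or $L_{tot}$ and may be removed. With this convention, $2|E|\ge |V|$, so $L_{tot}=r|E|\ge r|V|/2$. Combining the bounds yields
\[
\frac{C^*(k)}{L_{tot}} \;\le\; \frac{3R|W|}{r\,|V|} \;=\; O\!\left(\frac{1}{|V|}\right),
\]
which is exactly the claimed scaling.

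The main subtlety I expect is the linear lower bound on $L_{tot}$: feasibility by itself does not rule out graphs dominated by isolated nodes, so I would need to state explicitly that such nodes may be discarded (equivalently, adopt the mild structural assumption $|E|=\Omega(|V|)$). Once that is in hand, the argument is essentially a two-line computation built on the constant degree bound from feasibility and the $\frac{3}{2}$-approximation from Lemma \ref{thm:C(y,k)-bounds}; no additional properties of Algorithm \ref{alg::IFA} beyond the trivial bound $M_1(y,k)\le r\,d_{\max}$ are required.
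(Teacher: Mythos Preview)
Your argument is correct, and it reaches the same conclusion as the paper's proof, but by a simpler and more general route. The paper exploits the specific structure of Algorithm \ref{alg::IFA}: since each white channel receives at most $\lceil (d_{\max}+1)/|W|\rceil$ of the $d_{\max}+1$ matchings, it bounds both $M_1(y,k)$ and $M_2(y,k)$ directly by $rk\lceil (d_{\max}+1)/|W|\rceil$, and then invokes feasibility to cap $d_{\max}$ exactly as you do. You instead bypass the matching-partition analysis entirely by combining Lemma \ref{thm:C(y,k)-bounds} with the trivial inequality $M_1(y,k)\le r\,d_{\max}$, which holds for \emph{any} assignment $y$; this shows that the $O(1/|V|)$ scaling in fact does not rely on any property of Algorithm \ref{alg::IFA}. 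What you give up is the sharper constant: the paper obtains $C^*(k)\le rk\lceil R/r+1/|W|\rceil$, which is linear in $k$ and tighter than your $\tfrac{3}{2}R|W|$ whenever $k$ is small relative to $|W|$, though this refinement is irrelevant for the asymptotic statement. Your observation about isolated nodes is also well placed; the paper silently assumes $d_{\min}\ge 1$ at the same step.
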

\begin{proof}
Consider the channel assignment scheme shown in Algorithm \ref{alg::IFA}. Clearly, each white channel is assigned to at most $\left\lceil\frac{d_{\max}+1}{|W|}\right\rceil$ matchings; thus, at most $\left\lceil\frac{d_{\max}+1}{|W|}\right\rceil$ links incident on the same node are assigned the same channel. Denote $y$ the above white channel assignment. It follows that
\begin{equation}\label{eqn:suff-cond-bound-nc}
\sum_{e\in\delta(v)}r_ey_e^w\leq r \left\lceil\frac{d_{\max}+1}{|W|}\right\rceil,\forall w,v
\end{equation}
Note that the  matching sets derived in Algorithm \ref{alg::IFA} is also a matching set partition of $E(U)$ for each $U\in\mathcal{V}$. Hence, each white channel is assigned to at most $\left\lceil\frac{d_{\max}+1}{|W|}\right\rceil$ matchings in the matching set partition of $E(U)$. Since each matching of $E(U)$ has at most $\frac{|U|-1}{2}$ edges, it follows that for each $w\in W$ and $U\in\mathcal{V}$
\begin{equation}\label{eqn:suff-cond-bound-osc}
\sum_{e\in E(U)}r_ey_e^w\leq r \frac{|U|-1}{2}\left\lceil\frac{d_{\max}+1}{|W|}\right\rceil.
\end{equation}
By \eqref{eqn:suff-cond-bound-nc} and \eqref{eqn:suff-cond-bound-osc}, we can see that $M_1(y,k)$ and $M_2(y,k)$ are upper bounded by
\begin{equation}\label{eqn:capa-ratio-m2-bnd}
rk\left\lceil\frac{d_{\max}+1}{|W|}\right\rceil.
\end{equation}
It follows that
\begin{align*}
C^*(k)&\leq\max\{M_1(y,k),M_2(y,k)\}\\
&\leq rk\left\lceil\frac{d_{\max}+1}{|W|}\right\rceil.
\end{align*}
If $\mathsf{WhiteRec}$ is feasible, then we have for any $v\in V$
\[
\sum_{e\in\delta(v)}\sum_{w\in W} r_ey_e^w=r\sum_{e\in\delta(v)}\sum_{w\in W} y_e^w=r d_v\le |W|R,
\]
where $d_v$ is the degree of node $v$. Then it follows that $d_{\max}\le \frac{|W|R}{r}$, which implies that
\begin{equation}\label{eqn::scale1}
C^*(k)\le rk \left\lceil\frac{R}{r}+\frac{1}{|W|}\right\rceil.
\end{equation}
At the same time, it is easy to see that
\begin{equation}\label{eqn::scale2}
L_{tot}=r|E|\ge r\frac{d_{\min}|V|}{2}\ge \frac{r|V|}{2}.
\end{equation}
Dividing \eqref{eqn::scale1} by \eqref{eqn::scale2}  yields the desired result. Note that $R$, $r$, $k$ and $|W|$ are regarded as asymptotically constant factors when compared to $|V|$.
\end{proof}

\noindent \textbf{Remark.} The proof to Theorem \ref{thm::capacity-ratio} is specific to Algorithm \ref{alg::IFA}. However, since the gap between Algorithm \ref{alg::greedy} and \ref{alg::IFA} (in terms of the ratio between their required recovery capacities) is a constant, we can conclude that the scaling of recovery capacity under Algorithm \ref{alg::greedy} is also $O(\frac{1}{|V|})$ as $|V|\rightarrow\infty$.

\vspace{1mm}

Theorem \ref{thm::capacity-ratio} demonstrates that as the network size grows, the recovery capacity needed to protect against $k$ white channel preemptions becomes negligible as compared to the total traffic. Our simulation results (see Section \ref{sec::simulation-scaling}) show that the recovery capacity required to survive 2 preemptions is less than 1\% of the total traffic in a 200-node network, even with very few white channels. This is mainly due to the effect of spatial reuse. That is, although the total traffic increases linearly with the network size, more links can be activated simultaneously; thus, the required recovery capacity does not scale up with the network size.

\section{Performance Evaluation}\label{sec::simulation}
In this section, we numerically study our schemes. Specifically, we seek to answer the following questions:

\vspace{1mm}

\noindent $\bullet$ How does the recovery capacity scale with the network size?

\vspace{1mm}

\noindent $\bullet$ What is the bi-criteria approximation quality of the greedy algorithm and IFA?

\subsection{Simulation Setup}
We use Erd\H{o}s-Renyi Random Graph to simulate the network topology, where links are established with probability 0.6 and the maximum node degree is bounded by 8.
The traffic demand over each link is uniformly distributed in the range [1,100]Mbps. The capacity of each white channel is uniformly distributed in the range [75,200]Mbps. In our simulation, 5000 random graph instances are tested.
\subsection{Scaling of Recovery Capacity}\label{sec::simulation-scaling}
We first investigate how the relative recovery capacity ratio (see Section \ref{sec::scaling} for the definition) scales with the network size. As is observed in Fig. \ref{fig:scaling}, the recovery capacity ratio goes down with the growth of the network size. Specifically, the required recovery capacity is only around 1\% of the total traffic demands in a 200-node network, even with very few white channels (e.g., $|W|=3$). Therefore, we expect the recovery capacity to become negligible as compared to the total traffic demands as the network size continues to grow. In addition, curve fitting shows that the recovery capacity ratio scales as $\Theta(\frac{1}{|V|^a})$ where $a$ ranges in 1.02-1.09, which roughly matches the theoretical bound we obtain in Theorem \ref{thm::capacity-ratio}. 
\begin{figure}[ht]
\begin{center}
\includegraphics[width=2.5in]{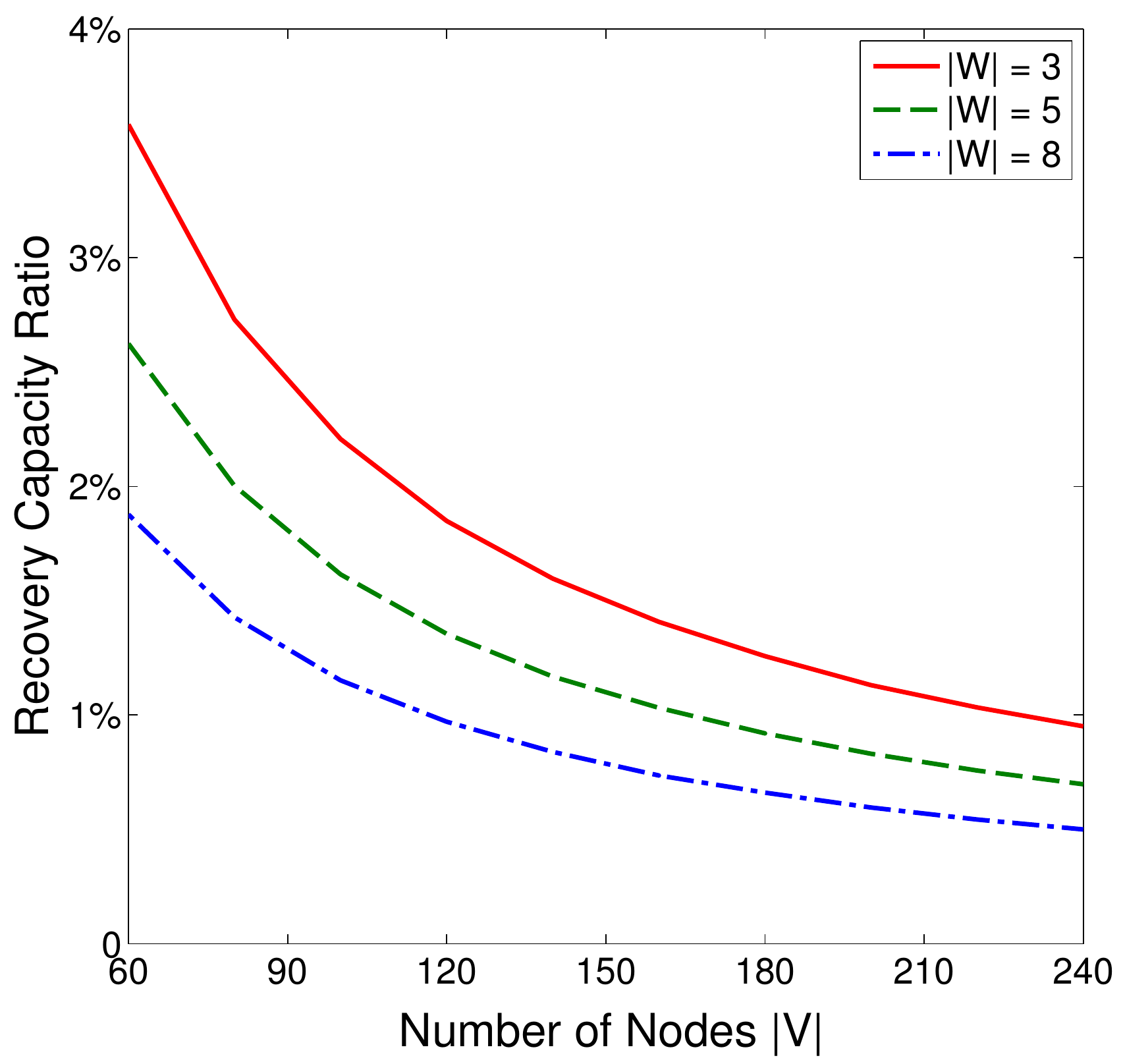}\vspace{-0.3cm}
\caption{Scaling of the relative recovery capacity ratio with the network size $|V|$ (where $k=2$ channel preemptions are to be survived).}
\label{fig:scaling}\vspace{-3mm}
\end{center}
\end{figure}

\subsection{Approximation Quality}
Since we consider the bi-criteria approximation framework, two metrics should be evaluated: the recovery capacity and the fraction of traffic sustained over each link. Through the rest of this section, we study the two aspects by comparing the following schemes.
\begin{itemize}
\item Greedy Algorithm (Algorithm \ref{alg::greedy}).
\item Interference-Free Assignment (IFA). Note that an interference-free assignment is guaranteed to exist only if $|W|>d_{\max}$ (in our simulation, $d_{\max}=8$).
\item Random Assignment (RndAssign) that assigns each link a random white channel.
\item Optimal result to $\mathsf{WhiteRec}$, computed with Gurobi, a large-scale mathematical programming solver.
\end{itemize}

\noindent \textbf{Recovery Capacity.}
Fig. \ref{fig:recovery} illustrates the comparison of these schemes with respect to the recovery capacity; Table \ref{t1} lists the detailed approximation gap\footnote{The approximation gap is defined by $\frac{\mathsf{ALG}-\mathsf{OPT}}{\mathsf{OPT}}$, where $\mathsf{ALG}$ is the amount of required recovery capacity by using the approximation algorithm and $\mathsf{OPT}$ is the minimum recovery capacity.}.
We first focus on the approximation quality of IFA. When $k=1$, IFA yields the same amount of recovery capacity as the optimal solution and the approximation gap is zero. In fact, it can be analytically shown that IFA is optimal when $k=1$. When $k=2$, IFA is only slightly worse than the optimum (less than 2\%, as is shown in Table \ref{t1}), much better than the 1.25-approximation bound. The only caveat is that IFA is guaranteed to exist only if $|W|>d_{\max}$.

Next, we investigate the approximation quality of the greedy algorithm. Despite its relatively loose approximation ratio, the greedy algorithm performs very well in practice. The worst approximation gap is 26\% when $k=1$ and 14\% when $k=2$. It also outperforms the random assignment by almost an order of magnitude in terms of the approximation gap. When compared with IFA, the greedy algorithm is slightly worse but it has the  advantage of being applicable in any scenario.

\begin{figure}[ht]
\begin{center}
\includegraphics[width=3.0in]{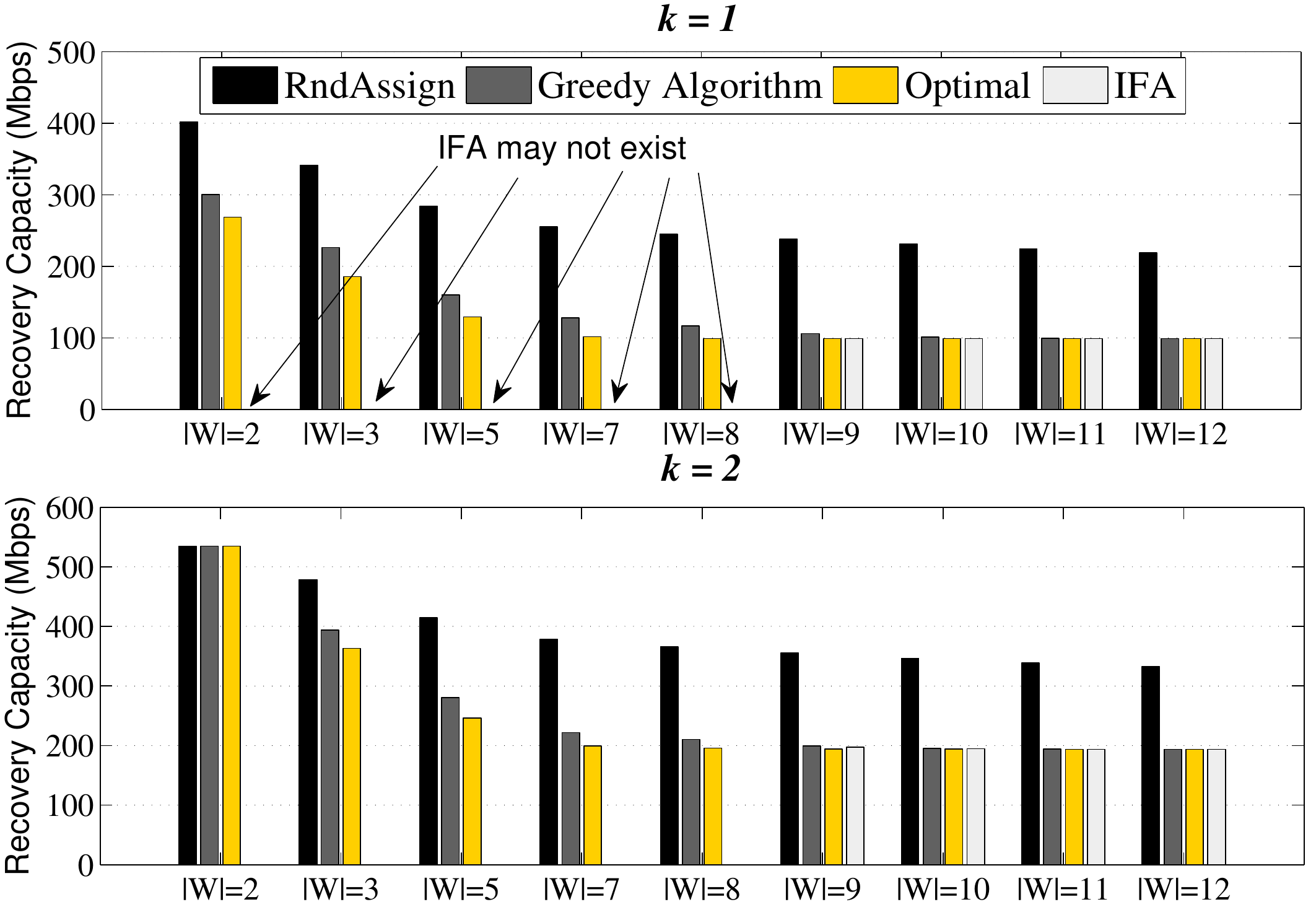}\vspace{-0.3cm}
\caption{The comparison among different algorithms with respect to the required recovery capacity ($|V|=20$).}
\label{fig:recovery}
\end{center}\vspace{-3mm}
\end{figure}


\begin{table}[h]
\caption{\label{t1}Approximation gap of different schemes}
\centering
\small
\begin{tabular}{|c|c|c|c|l|c|c|c|}
\hline
\multicolumn{4}{|c|}{\textbf{Survive $k=1$ failures}} &  & \multicolumn{3}{c|}{\textbf{Survive $k=2$ failures}} \\ \hline
$|W|$        & Rnd         & Greedy       & IFA       &  & Rnd             & Greedy            & IFA            \\ \hline
2            & 50\%        & 12\%         & N/A       &  & 0\%             & 0\%               & N/A            \\ \hline
3            & 84\%        & 22\%         & N/A       &  & 32\%            & 9\%               & N/A            \\ \hline
5            & 120\%       & 24\%         & N/A       &  & 68\%            & 14\%              & N/A            \\ \hline
7            & 151\%       & 26\%         & N/A       &  & 90\%            & 11\%              & N/A            \\ \hline
8            & 148\%       & 18\%         & N/A       &  & 87\%            & 7\%               & N/A            \\ \hline
9            & 140\%       & 7\%          & 0\%       &  & 83\%            & 3\%               & 2\%            \\ \hline
10           & 134\%       & 3\%          & 0\%       &  & 79\%            & 1\%               & 0\%            \\ \hline
11           & 127\%       & 1\%          & 0\%       &  & 75\%            & 0\%               & 0\%            \\ \hline
12           & 122\%       & 0\%          & 0\%       &  & 72\%            & 0\%               & 0\%            \\ \hline
\end{tabular}
\end{table}

\noindent \textbf{Sustainable Traffic.}
In Fig. \ref{fig:loads}, we illustrate the comparison among different assignment schemes in terms of the fraction of traffic sustained over each link. Note that the maximum sustainable traffic level is obtained by solving $\mathsf{FEASI}$ (see Section \ref{subsec::feasibility}) in Gurobi.  We first notice that if there is only a small number of white channels, the maximum sustainable traffic level can be less than 100\%.  With more white channels, we have more spectrum resources and 100\% traffic demands are sustainable. By comparison, the fraction of traffic sustained by the greedy algorithm is reasonably good as compared to the maximum sustainable level (at least 60\% of the maximum), and the greedy algorithm significantly outperforms the random assignment. In particular, given a sufficient number of white channels (say $|W|\ge 9$), the greedy algorithm yields a similar performance to IFA and sustains over 90\% traffic demands. 

\begin{figure}[ht]
\begin{center}
\includegraphics[width=2.6in]{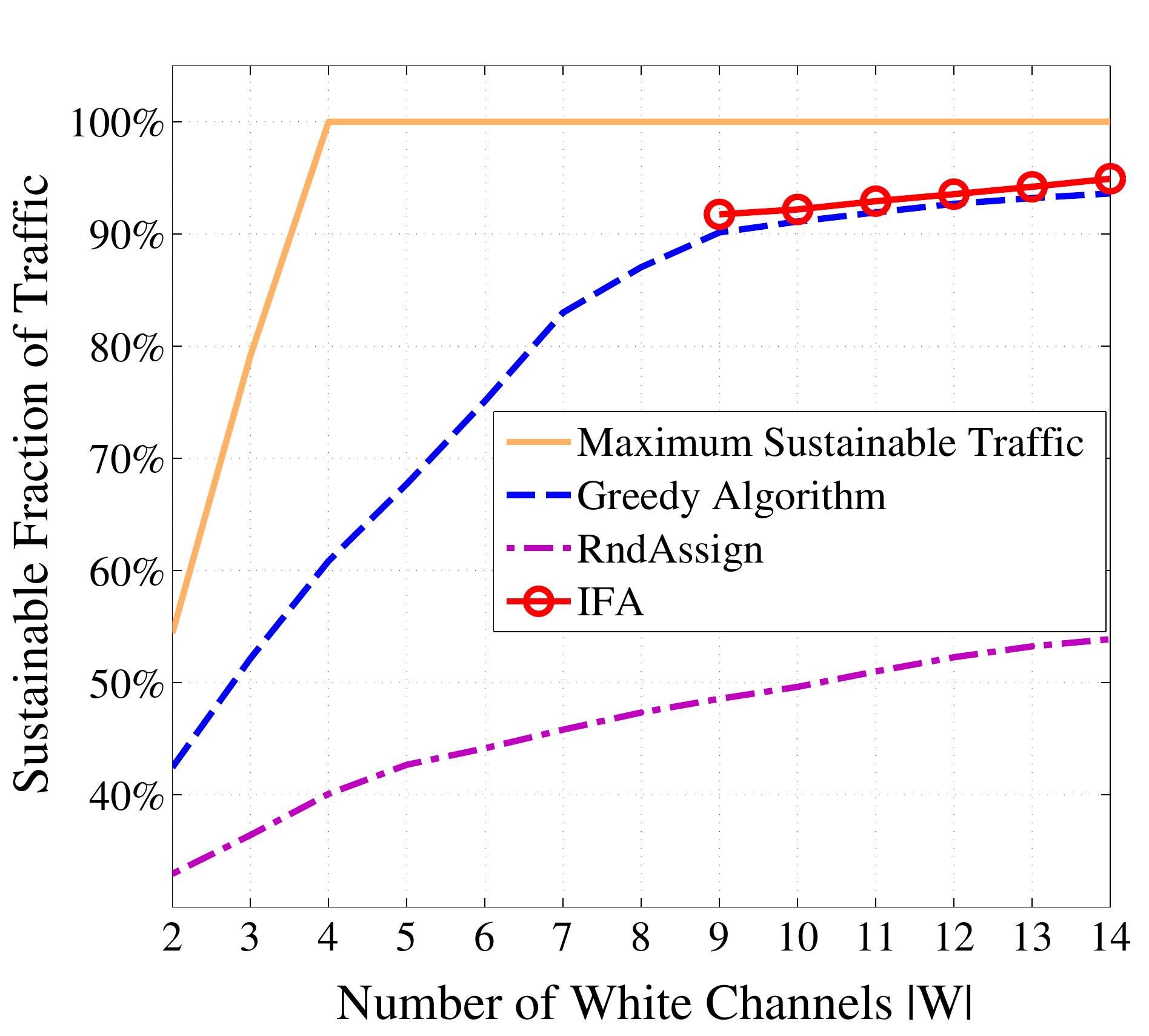}
\caption{Comparison among different algorithms with respect to the fraction of traffic  sustained over each link ($|V|=20$).}
\label{fig:loads}\vspace{-0.3cm}
\end{center}
\end{figure}

\section{Conclusions}\label{sec::conclusion}
In this paper, we use backup channels to provide reliability guarantees for secondary users. In particular, we investigate the optimal white channel assignment that minimizes the recovery capacity required to survive a given number of white channel preemptions. This problem is shown to be coupled by two NP-hard objectives, and two bi-criteria approximation schemes are developed. Moreover, we show that the required recovery capacity is negligible as compared to the total network traffic in a large-scale network, which demonstrates the scalability of this framework.

\appendix

\section{Proofs}
\subsection{Proof to Lemma \ref{thm::bipartite-C(y,k)-bound}} \label{proof::thm::bipartite-C(y,k)-bound}
Let
\[
(U^*,S^*)=\max_{U\in\mathcal{V},S\in\mathcal{W}(k)}\frac{1}{\alpha(U)}\sum_{w\in S}\sum_{e\in E(U)}r_ey_e^w.
\]
The induced graph on $U^*$ is denoted by $G^*=(U^*,E(U^*))$, which is still a bipartite graph. Let $C$ be the minimum vertex cover of $G^*$. Then we have
\[
\begin{split}
M_2(y,k)&=\frac{1}{\alpha(U^*)}\sum_{e\in E(U^*)}\sum_{w\in S^*} r_ey_e^w\\
&\le \frac{1}{\alpha(U^*)}\sum_{v\in C}\sum_{e\in \delta(v)\cap E(U^*)}\sum_{w\in S^*}r_ey_e^w\\
&\le \frac{1}{\alpha(U^*)} \sum_{v\in C}\sum_{e\in\delta(v)}\sum_{w\in S^*}r_ey_e^w\\
&\le \frac{1}{\alpha(U^*)} \sum_{v\in C}M_1(y,k)\\
&=\frac{1}{\alpha(U^*)} |C|M_1(y,k),
\end{split}
\]
where the first inequality holds because every edge in $E(U^*)$ is incident on at least one node in $C$. By K$\ddot{\text{o}}$nig's Theorem \cite[pp. 203–207]{Konig},  the size of the minimum vertex cover equals to the size of the maximum matching in a bipartite graph. Thus,  $|C|$ is upper-bounded by $\lfloor \frac{|U^*|}{2} \rfloor= \frac{|U^*|-1}{2}$ (note that $|U^*|$ is odd). Therefore, we can finally conclude that
\[
M_2(y,k)\le \frac{1}{\alpha(U^*)} |C|M_1(y,k)=M_1(y,k).
\]
\subsection{Proof to Theorem \ref{thm::greedy-approx}} \label{proof::thm::greedy-approx}
Before the detailed proof, we first introduce a relaxed problem called $\mathsf{WhiteRecInf}$, which is the same as $\mathsf{WhiteRec}$ but assumes infinite channel capacity such that any channel assignment can support the given traffic demands. In other words, feasibility conditions \eqref{eqn:node-constraint-y} and \eqref{eqn:oddset-constraint-y} are relaxed.
\begin{align}
\mathsf{WhiteRecInf}:\quad \min_{y}& \quad C(y,k)\nonumber\\
\textrm{ s.t.}&\sum_{w\in W}y_e^w=1,\forall e\in E\nonumber\\
&y_e^w\in\{0,1\},\forall e\in E, w\in W\nonumber.
\end{align}

We first show that the greedy algorithm yields no more than $\rho$ times of the minimum recovery capacity in $\mathsf{WhiteRec}$. The following notations are necessary.\\
$\bullet$ $y_{\mathsf{OPT}}$ is the optimal solution to $\mathsf{WhiteRec}$.\\
$\bullet$ $y^*$ is the optimal solution to $\mathsf{WhiteRecInf}$.\\
$\bullet$ $\hat{y}$ is the solution given by the greedy algorithm. Also denote $\hat{E}_w$ the set of links that are assigned channel $w$ under $\hat{y}$.

It is clear that $C(y^*,k)\le C(y_{\mathsf{OPT}},k)$ since $\mathsf{WhiteRecInf}$ is the relaxed problem of $\mathsf{WhiteRec}$. Hence, it suffices to prove $C(\hat{y},k)\le \rho C(y^*,k)$. To facilitate our proof, we introduce a lemma.

\begin{lemma}\label{lm::greedy-approx}
 $C(y^*,k)\ge \frac{k}{|W|}\max_{v\in V}\sum_{e\in \delta(v)}r_e$.
\end{lemma}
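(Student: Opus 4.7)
The plan is to prove the stronger statement that $C(y,k) \ge \frac{k}{|W|}\max_{v\in V}\sum_{e\in\delta(v)} r_e$ holds for every channel assignment $y$ satisfying $\sum_{w\in W} y_e^w = 1$, and then specialize to $y = y^*$. The result will follow immediately since by definition $C(y,k) \ge M_1(y,k)$, so it suffices to lower-bound $M_1(y,k)$.

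The key observation is a simple averaging argument at an arbitrary node $v\in V$. Because each link is assigned exactly one channel, we have the identity
\[
\sum_{w\in W}\sum_{e\in\delta(v)} r_e y_e^w \;=\; \sum_{e\in\delta(v)} r_e\sum_{w\in W} y_e^w \;=\; \sum_{e\in\delta(v)} r_e.
\]
Thus the $|W|$ per-channel loads $L_w(v) \triangleq \sum_{e\in\delta(v)} r_e y_e^w$ sum to the total incident traffic at $v$. Among $|W|$ nonnegative numbers summing to this total, the top $k$ must sum to at least a $\frac{k}{|W|}$ fraction of the total (their average exceeds the global average). So there exists $S\in\mathcal{W}(k)$ with $\sum_{w\in S} L_w(v) \ge \frac{k}{|W|}\sum_{e\in\delta(v)} r_e$.

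Next I would take the maximum of this bound over $v\in V$, noting that the definition of $M_1(y,k)$ maximizes jointly over $v$ and $S$, hence
\[
M_1(y,k) \;\ge\; \max_{v\in V}\max_{S\in\mathcal{W}(k)} \sum_{w\in S} L_w(v) \;\ge\; \frac{k}{|W|}\max_{v\in V}\sum_{e\in\delta(v)} r_e.
\]
Combining with $C(y,k) \ge M_1(y,k)$ from \eqref{eqn::M1M2} and applying this to $y = y^*$ finishes the proof. There is no real obstacle here; the entire argument is a one-line averaging bound. The only care needed is to observe that the assignment constraint $\sum_{w\in W} y_e^w = 1$, which $y^*$ satisfies as a feasible solution to $\mathsf{WhiteRecInf}$, is what allows us to equate the column sums with $\sum_{e\in\delta(v)} r_e$; the feasibility conditions \eqref{eqn:node-constraint-y}--\eqref{eqn:oddset-constraint-y} are not needed.
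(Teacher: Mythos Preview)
Your proof is correct and follows essentially the same averaging argument as the paper: at the maximizing node, the $|W|$ per-channel loads sum to the total incident traffic, so the top $k$ of them contribute at least a $k/|W|$ fraction, which lower-bounds $M_1(y,k)$ and hence $C(y,k)$. Your presentation is slightly cleaner (a direct averaging inequality rather than a contradiction) and stated for arbitrary $y$, but the underlying idea is identical.
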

\begin{proof}
Let $v_1$ be the node with the maximum traffic demands, i.e., $v_1=\arg\max_{v\in V}\sum_{e\in \delta(v)}r_e$. Also denote $L^*_w$ the total traffic supported by white channel $w$ at node $v_1$ under assignment $y^*$. Without loss of generality, we assume  $L^*_{1}\ge L^*_{2}\ge...\ge L^*_{|W|}$.
Then it follows that
\[
\begin{split}
M_1(y^*,k)&=\sum_{1\le w\le k} L^*_w.
\end{split}
\]
If $M_1(y^*,k) < \frac{k}{W}\sum_{e\in \delta(v_1)}r_e$, we would obtain
\[
\frac{\sum_{1\le w\le k} L^*_w}{k}< \frac{1}{W}\sum_{e\in \delta(v_1)}r_e,
\]
i.e., the average traffic (at node $v_1$) in the first $k$ white channels are smaller than the average traffic (at node $v_1$) in all the white channels. This is an obvious contradiction since the first $k$ channels support more loads at node $v_1$ than the remaining $|W|-k$ channels. Hence we can conclude that
\[
C(y^*,k) \ge M_1(y^*,k)\ge  \frac{k}{W}\sum_{e\in \delta(v_1)}r_e=\frac{k}{W}\max_{v\in V}\sum_{e\in \delta(v)}r_e.
\]
This completes the proof to the lemma.
\end{proof}

\noindent Now we get back to proving $C(\hat{y},k)\le \rho C(y^*,k)$. Define
\[
v^*=\arg\max_{v\in V, S\in\mathcal{W}(k)}\sum_{w\in S}\sum_{e\in \delta(v)\cap \hat{E}_w}r_e,
\]
and without loss of generality, we suppose
\[
\sum_{e\in \delta(v^*)\cap \hat{E}_1}r_e \ge \sum_{e\in \delta(v^*)\cap \hat{E}_2}r_e \ge \cdots \sum_{e\in \delta(v^*)\cap \hat{E}_{|W|}}r_e.
\]
Note that under the above definitions, we have
\[
\begin{split}
M_1(\hat{y},k)&=\sum_{1\le w\le k}\sum_{e\in \delta(v^*)\cap \hat{E}_w}r_e.
\end{split}
\]
Suppose $e_w=(v^*, u_w^*)$ is the last edge added to $\hat{E}_w$ that is incident on $v^*$, and denote $D_w$ the set of edges that have been assigned a white channel before edge $e_w$. Then it follows that for any white channel $w\in W$
\begin{align}
\sum_{e\in \delta(v^*)\cap \hat{E}_w}r_e &= \sum_{e\in \delta(v^*)\cap \hat{E}_w \cap D_w}r_e+r_{e_w} \label{app_eq1}\\
&\le\sum_{e\in \delta(v^*,u_w^*)\cap \hat{E}_w \cap D_w}r_e+r_{e_w}\label{app_eq2}\\
&\le \frac{1}{|W|} \sum_{e\in \delta(v^*,u_w^*)\cap D_w}r_e+r_{e_w}\label{app_eq3}\\
&\le \frac{1}{|W|}  \Big(\sum_{e\in \delta(v^*)}r_e+ \sum_{e\in \delta(u_w^*)}r_e\Big)+\frac{|W|-2}{|W|}r_{e_w}\label{app_eq4}\\
&\le \frac{2}{|W|} \max_{v\in V}\sum_{e\in \delta(v)}r_e+\frac{|W|-2}{|W|}r_{e_w}\label{app_eq5}.
\end{align}
Here, \eqref{app_eq1} holds because edge $e_w$ is the last one added to $\hat{E}_w$ that is incident on $v^*$; \eqref{app_eq3} is due to the fact edge $e_w$ is assigned channel $w$ only if channel $w$ has the minimum aggregate loads at node $v^*$ and $u_w^*$ among all white channels (see step \ref{s4} in the greedy algorithm); \eqref{app_eq4} holds because $e_w$ is incident on both $v^*$ and $u_w^*$ while $D_w$ excludes $e_w$.
Then we have
\[
M_1(\hat{y},k)\le \frac{2k}{W}\max_{v\in V}\sum_{e\in\delta(v)}r_e+\frac{|W|-2}{|W|}\sum_{1\le w\le k}r_{e_w}.
\]
By Lemma \ref{lm::greedy-approx}, we know
\[
\frac{k}{|W|}\max_{v\in V}\sum_{e\in\delta(v)}r_e\le C(y^*,k).
\]
At the same time, notice that $e_1,e_2,...,e_{|W|}$ are distinct edges incident on $v^*$. Then it is easy to see that
\[
\sum_{1\le w\le k}r_{e_w}\le C(y^*,k).
\]
Therefore, we can conclude that
\[
M_1(\hat{y},k)\le (3-\frac{2}{|W|}) C(y^*,k) = \frac{2}{3}\rho C(y^*,k).
\]
By Theorem \ref{thm:C(y,k)-bounds}, we finally have
\[
C(\hat{y},k) \le \frac{3}{2} M_1(\hat{y},k)\le \rho C(y^*,k).
\]

We now show that  at least $\frac{1}{\rho}\frac{R_{\min}}{R_{\max}}$-fraction of traffic can be sustained by the greedy assignment.
Let $\hat{y}$ be the solution obtained by the greedy algorithm, and denote by $\tilde{y}$ the optimal solution to $\mathsf{FEASI}$. Then it follows from Lemma \ref{lm::feasibility} that
\begin{equation}\label{bi_eq1}
\frac{\beta(\hat{y})}{\beta(\tilde{y})}\ge \frac{C(\tilde{y},1)}{C(\hat{y},1)}\frac{R_{\min}}{R_{\max}}.
\end{equation}
Denote $\mathsf{OPT}_1$ the minimum recovery capacity required to survive \textbf{one preemption} in $\mathsf{WhiteRec}$. Note that $\hat{y}$ is intended for surviving any $k$ preemptions. However, the greedy algorithm is invariant to the number of preemptions we need to survive so $\hat{y}$ is also the greedy assignment for surviving one preemption. Thus, we have $C(\hat{y},1)\le \rho \mathsf{OPT}_1$ according to the first approximation ratio. Also note that $C(\tilde{y},1)\ge \mathsf{OPT}_1$. Then
\begin{equation}\label{bi_eq2}
\frac{C(\tilde{y},1)}{C(\hat{y},1)}\ge \frac{\mathsf{OPT}_1}{\rho \mathsf{OPT}_1}=\frac{1}{\rho}.
\end{equation}
Taking \eqref{bi_eq2} into \eqref{bi_eq1}, we have
\[
\frac{\beta(\hat{y})}{\beta(\tilde{y})}\ge \frac{1}{\rho}\frac{R_{\min}}{R_{\max}} .
\]
Since $\mathsf{WhiteRec}$ has a feasible solution, then $\beta(\tilde{y})\ge 1$ and $\beta(\hat{y}) \ge \frac{1}{\rho }\frac{R_{\min}}{R_{\max}}$. This completes our proof.
\subsection{Proof to Corollary \ref{thm::ifca-optimal}} \label{proof::thm::ifca-optimal}
We only prove that any IFA yields the minimum recovery capacity in above three scenarios while the ratio for the sustainable traffic follows the same argument as in the proof to Theorem \ref{thm::ifca-bound}.

\noindent \textbf{Part (i).} Since $y$ is interference-free and $k=1$, we have
\[
M_1(y,k)=\max_{v\in V,w\in W}\sum_{e\in\delta(v)}r_ey_e^w=r_{\max},
\]
\vspace{-2mm}
\begin{align}
M_2(y,k)&\leq
r_{\max}\max_{U\in\mathcal{V},w\in W}\frac{2}{|U|-1}\sum_{e\in E(U)}y_e^w\nonumber\\
&\leq r_{\max}\max_{U\in\mathcal{V},w\in W}\frac{2}{|U|-1}\left\lfloor\frac{|U|}{2}\right\rfloor\label{eqn::optimal1}\\
&= r_{\max}\nonumber,
\end{align}
where \eqref{eqn::optimal1} is due to the fact that $y$ is interference-free and consequently, the number of links in $U\subseteq V$ using the same channel is upper-bounded by the size of maximum matching in $E(U)$, which is $\left\lfloor\frac{|U|}{2}\right\rfloor$. The above two bounds show that $C(y,1)=r_{\max}$. Since $C(\tilde{y},1)\geq r_{\max}$ for any assignment $\tilde{y}$, we can conclude that $y$ yields the minimum recovery capacity.

\vspace{1mm}

\noindent \textbf{Part (ii).} Consider an arbitrary interference-free channel assignment $y$. Since $y$ is interference-free, all the links incident to a node are assigned different channels. Furthermore, we have $k\leq d_{\max}$. Consequently, there exists a node $v$ and $S\in\mathcal{W}(k)$ such that $\sum_{w\in S}\sum_{e\in\delta(v)}y_e^w=k$. Using this observation, the value $M_1(y,k)$ can be rewritten as
\begin{align*}
M_1(y,k)=&r\max_{v\in V,S\in\mathcal{W}(k)}\sum_{w\in S}\sum_{e\in\delta(v)}y_e^w=rk
\end{align*}

On the other hand, using the similar trick to Part (i), we can bound $M_2(y,k)$ by $M_2(y,k)=rk$.
Hence, we have $C(y,k)=M_1(y,k)=rk$.
Clearly, $C(\tilde{y},k)\ge rk$ any assignment $\tilde{y}$; thus, $y$ yields the minimum recovery capacity.

\vspace{1mm}

\noindent \textbf{Part (iii).} This part directly follows from Lemma \ref{thm::bipartite-C(y,k)-bound} and Lemma \ref{lem:ifca-minimize-m1}.
\end{document}